\documentclass[12pt]{amsart}
 \pdfoutput=1
\usepackage{amsmath}
\usepackage{latexsym}
\usepackage{amsfonts}
\usepackage{amssymb}
\usepackage{color}
\usepackage{bbm,dsfont}
\usepackage{graphicx}
\usepackage{hyperref}
\usepackage{enumerate}
\usepackage{subfigure}
\usepackage{MnSymbol}
\usepackage{enumitem}


\newtheorem{proposition}{Proposition}
\newtheorem{theorem}{Theorem}

\newtheorem{corollary}{Corollary}

\theoremstyle{definition}

\newtheorem{remark}{Remark}
\newtheorem{example}{Example}
\newtheorem{definition}{Definition}


\newcommand{\real}{\mathbb R} 
\newcommand{\RR}{\mathbb R} 
\newcommand{\complex}{\mathbb C} 
\newcommand{\CC}{\mathbb C} 
\newcommand{\half}{\tfrac{1}{2}} 
\newcommand{\mo}[1]{\left| #1 \right|} 
\newcommand{\abs}{\mo} 

\newcommand{\ff}{\mathcal{F}}

\newcommand{\hi}{\mathcal{H}} 
\newcommand{\hh}{\mathcal{H}} 
\newcommand{\kk}{\mathcal{K}} 
\newcommand{\lh}{\mathcal{L(H)}} 
\newcommand{\lk}{\mathcal{L(K)}} 
\newcommand{\elle}[1]{\mathcal{L}(#1)} 
\newcommand{\ip}[2]{\left\langle\,#1\,|\,#2\,\right\rangle} 
\newcommand{\no}[1]{\left\|#1\right\|} 
\newcommand{\tr}[1]{\mathrm{tr}\left[#1\right]} 
\newcommand{\id}{\mathbbm{1}} 






 %
\newcommand{\ind}[1]{\mathfrak{i}(#1)}

\newcommand{\A}{\mathsf{A}}
\newcommand{\B}{\mathsf{B}}
\newcommand{\C}{\mathsf{C}}
\newcommand{\D}{\mathsf{D}}
\newcommand{\F}{\mathsf{F}}
\newcommand{\G}{\mathsf{G}}
\newcommand{\X}{\mathsf{X}}
\newcommand{\Y}{\mathsf{Y}}
\newcommand{\Z}{\mathsf{Z}}


\newcommand{\e}{{\rm e}}
\newcommand{\tA}{A}
\newcommand{\tB}{B}
\newcommand{\Sym}{{\rm Sym}}
\newcommand{\Bin}[2]{\left(\begin{array}{c} #1 \\ #2 \end{array}\right)}


\begin{document}

\title{Quantum incompatibility in collective measurements}

\begin{abstract}{We study the compatibility (or joint measurability) of quantum observables in a setting where the experimenter has access to multiple copies of a given quantum system, rather than performing the experiments on each individual copy separately. We introduce the index of incompatibility as a quantifier of incompatibility in this multi-copy setting, as well as the notion of compatibility stack representing the various compatibility relations present in a given set of observables. We then prove a general structure theorem for multi-copy joint observables, and use it to prove that all abstract compatibility stacks with three  vertices have realizations in terms of quantum observables. }
\end{abstract}

\author[Carmeli]{Claudio Carmeli}
\address{\textbf{Claudio Carmeli}; DIME, Universit\`a di Genova, Via Magliotto 2, I-17100 Savona, Italy}
 \email{claudio.carmeli@gmail.com}

\author[Heinosaari]{Teiko Heinosaari}
\address{\textbf{Teiko Heinosaari}; Turku Centre for Quantum Physics, Department of Physics and Astronomy, University of Turku, Finland}
\email{teiko.heinosaari@utu.fi}

\author[Reitzner]{Daniel Reitzner}
\address{\textbf{Daniel Reitzner}; Institute of Physics, Slovak Academy of Sciences, D\'ubravsk\'a cesta 9, 845 11 Bratislava, Slovakia}
\email{daniel.reitzner@savba.sk}

\author[Schultz]{Jussi Schultz}
\address{\textbf{Jussi Schultz}; Turku Centre for Quantum Physics, Department of Physics and Astronomy, University of Turku, Finland}
\email{jussi.schultz@gmail.com}

\author[Toigo]{Alessandro Toigo}
\address{\textbf{Alessandro Toigo}; Dipartimento di Matematica, Politecnico di Milano, Piazza Leonardo da Vinci 32, I-20133 Milano, Italy, and I.N.F.N., Sezione di Milano, Via Celoria 16, I-20133 Milano, Italy}
\email{alessandro.toigo@polimi.it}

\maketitle




\section{Introduction}

The laws of quantum physics dictate that there are certain tasks which are mutually exclusive, meaning that they cannot be performed simultaneously with a single device. 
This quantum incompatibility is usually encountered in the context of mutually exclusive measurements: one cannot measure two orthogonal spin directions $\sigma_x$ and $\sigma_y$ with a single measurement setup. In the modern quantum information point of view, incompatibility has been identified as a genuine resource gained from switching from classical to quantum protocols. As such, it is at the heart of many typically quantum applications, such as secure quantum key distributions, or the possibility to steer remote quantum systems. Due to its importance in these applications as well as its status as a fundamental quantum feature, it is essential to gain a deeper understanding of quantum incompatibility.

Even though quantum theory gives predictions of outcomes in statistical experiments, the essence of incompatibility is manifested on the level of single experimental runs. More specifically, for compatible observables it is possible to find a single measurement setup such that on each experimental run, the reading of the measurement outcome allows one to assign the values of the outcomes for the compatible observables. The prototypical example of this is the joint measurement of a pair of compatible observables $\A$ and $\B$: if $\Omega_\A$ and $\Omega_\B$ are the outcome sets for $\A$ and $\B$, respectively, then a joint observable will have the outcome set $\Omega_\A\times \Omega_\B$. The measurement outcome in each experimental run is therefore a pair of numbers $(a,b)$, from which we assign the values $a$ and $b$ as the outcomes  of $\A$ and $\B$. By repeating this procedure multiple times, the resulting distributions should then correspond to those obtained from the separate statistical experiments of $\A$ and $\B$. This should highlight the distinction between the joint measurement of a pair of observables, and any scenario where the outcome distributions are reconstructed from the full distribution of a third measurement.

In this paper we take a step away from this usual framework, and study joint measurements of observables in a setting where the experimenter has access to multiple copies of a given quantum system, rather than performing the experiments on each individual copy separately. At first sight it may seem that the whole phenomenon of incompatibility is lost in such an approach: if two copies of the same system are available, then by measuring $\sigma_x$ on one system and $\sigma_y$ on the other, one has in a sense measured these incompatible observables jointly. However, things change drastically when one looks at more than two observables. In fact, by including also a third spin direction, $\sigma_z$, one gets a triple of incompatible observables which cannot be jointly measured even with two copies of the same system.

This approach leads to a new way of treating and quantifying the incompatibility of larger sets of observables, by looking at the minimal number of system copies needed to be able to measure all of them with a single collective measurement. We will define this number to be the {\em index of incompatibility} of the set. On a more detailed level, we define the {\em compatibility stack} of a set of observables as a list of hypergraphs expressing the various multi-copy compatibility relations between the observables. This definition naturally generalizes the joint measurability hypergraphs introduced in \cite{KuHeFr14}. After these general treatments we focus on the qubit case, where we prove results for the multi-copy joint measurability of triples of noisy qubit observables. In particular, we demonstrate that all compatibility stacks of order 3 have a quantum realization.

The paper is organized as follows. In Section \ref{sec:observables} we recall the definition of a quantum observable. Section \ref{sec:k-compatibility}  presents the notion of $k$-compatibility. The definition is then expanded in Section \ref{sec:stack}  to define the compatibility stack --- a mathematical way of describing $k$-compatibility relations within a given set of observables. Section \ref{sec:structure}  goes deeper into the notion of $k$-compatibility and provides a necessary and sufficient condition for $k$-compatibility of $n$ observables. The general content of the previous parts is then exemplified in Section \ref{sec:qubitcase} in the case of three qubit observables.
The conclusion and future outlooks are given in Section~\ref{sec:conclusion}.

\section{Quantum observables}
\label{sec:observables}

We start by recalling the definition of a quantum observable as a positive operator valued measure. 
In this paper we will restrict our investigation to observables with finite number of measurement outcomes. 
We refer to \cite{OQP97} for an exhaustive presentation of the properties of quantum observables.

The quantum mechanical description of a physical system is based on a complex Hilbert space $\hi$, which we assume to be finite dimensional throughout the paper. 
We denote by $\lh$ the vector space of linear operators on $\hi$, in which we let $\id$ be the identity operator.

\begin{definition}\label{def:obs}
Let $\Omega$ be a finite set. A map $\A:\Omega \to \lh$ is an \emph{$\lh$-valued observable on $\Omega$} if
\begin{enumerate}[leftmargin=*,labelsep=3mm,label=(\roman*)]
\item $\A(x) \geq 0$ for all $x\in\Omega$;
\item $\sum_{x\in\Omega} \A(x) = \id$.
\end{enumerate}
\end{definition}
The states of the system are represented by positive trace one operators on $\hh$, and for a  state $\varrho$ the number $\tr{\varrho\A(x)}$ is the probability of obtaining an outcome $x$ in a measurement of $\A$. 

As an example, consider the $x$-component of the spin of a spin-$1/2$ system. The corresponding observable is then the map $\mathsf{X}:\{-1,+1\}\to\mathcal{L}(\complex^2)$ defined on the two-outcome set $\Omega=\{+1,-1\}$ and having as its values the two orthogonal projections

$$
\mathsf{X}(\pm 1) = \frac{1}{2} (\id \pm \sigma_x).
$$
We can  add white noise to this observable, and this results in a noisy spin observable $\mathsf{X}_a:\{+1,-1\}\to \mathcal{L}(\complex^2)$ defined as

$$
\mathsf{X}_a(\pm 1) = \frac{1}{2} (\id \pm a\sigma_x),
$$
where the parameter $1-a$ is the noise intensity.
The $y$- and $z$-components of the spin are of course treated in the same manner giving rise to the corresponding observables $\mathsf{Y}$ and $\mathsf{Z}$, and their noisy versions \(\mathsf{Y}_b\) and \(\mathsf{Z}_c\).

It is occasionally convenient to view an observable $\A$ as a map on the power set $2^\Omega$ rather than the set $\Omega$.  For any $X\subseteq\Omega$, we denote $\A(X)=\sum_{x\in X} \A(x)$ so that 
\begin{enumerate}[leftmargin=*,labelsep=3mm,label=(\roman*')]
\item $\A(X) \geq 0$ for all $X\subseteq\Omega$;
\item $\A(\Omega) = \id$;
\item $\A(X\cup Y) = \A(X) + \A(Y)$ for all $X,Y\subseteq\Omega$ such that $X\cap Y = \emptyset$.
\end{enumerate}
The two definitions are clearly equivalent, and we will switch between them whenever it is convenient.

\section{$k$-compatibility of observables}\label{sec:k-comp}
\label{sec:k-compatibility}

\subsection{Definition}

Let $\A_1,\ldots,\A_n$ be $\lh$-valued observables with outcome sets $\Omega_1,\ldots,\Omega_n$.
The compatibility of these observables means that we can simultaneously implement their measurements, even if only one input state is available.
Generalizing the usual formulation of joint measurements, we assume that we have access to $k$ copies of the initial state. 
We can hence make a collective measurement on a  state $\varrho^{\otimes k}$ (For any $A\in\lh$ we use the notation $A^{\otimes k}\in\mathcal{L}(\hi^{\otimes k})$ for the $k$-fold tensor product $A^{\otimes k} = A\otimes \ldots \otimes A$, and we set $\hh^{\otimes 0}=\complex$  and $A^{\otimes 0} = 1$.). This measurement should give a measurement outcome for each observable $\A_1,\ldots,\A_n$, so we are looking for an $\elle{\hi^{\otimes k}}$-valued observable $\G$ on the product set $\Omega_1\times\ldots\times\Omega_n$.
In order for $\G$ to serve as a joint measurement, it is required that if we ignore other than the $i$th component $x_i$ of a measurement outcome $(x_1,\ldots,x_n)$, the probability must agree with the probability of getting $x_i$ in a measurement of $\A_i$.

For this reason, we introduce the $i$th marginal $\G^{[i]}$ of $\G$.
For all $X\subseteq\Omega_i$, $\G^{[i]}$ is the observable given by

\begin{align*}
\G^{[i]}(X) &= \G(\Omega_1 \times\cdots\times\Omega_{i-1}\times X \times\Omega_{i+1}\times\cdots\times\Omega_n) \, \\
& = \G(\pi_i^{-1}(X)) ,\nonumber
\end{align*}
where $\pi_i:\Omega_1\times\ldots\times\Omega_n\to\Omega_i$ is the projection $\pi_i(x_1,\ldots, x_n)= x_i$. 
This definition of a marginal can also be written in an equivalent form as 
\begin{equation*}
\G^{[i]}(x) = \mathop{\sum_{x_1,\ldots,x_{i-1},}}_{x_{i+1},\ldots,x_n} \G(x_1,\ldots,x_{i-1},x,x_{i+1},\ldots,x_n) \, 
\end{equation*}
required to hold for all $x\in\Omega_i$.

\begin{definition}\label{prop:marginal}
$\lh$-valued observables $\A_1,\ldots,\A_n$ on the outcome sets $\Omega_1,\ldots,\Omega_n$, respectively, are \emph{$k$-compatible} if there exists an $\elle{\hh^{\otimes k}}$-valued observable $\G$ on the product set $\Omega_1\times\ldots\times\Omega_n$, such that
\begin{equation}\label{eq:marginal}
\tr{\varrho^{\otimes k} \G^{[i]}(x)} = \tr{\varrho \A_i(x)}
\end{equation}
for all $i=1,\ldots,n$, $x \in \Omega_i$ and all states $\varrho$.
The observable $\G$ is called a \emph{$k$-copy joint observable} of $\A_1,\ldots,\A_n$.
\end{definition}

If $k=1$ in Definition \ref{prop:marginal}, then we have the usual definition of compatibility, also called joint measurability \cite{Lahti03}.

\subsection{Basic properties}

Let us observe some basic properties of the $k$-compatibility relation.
Firstly, for observables $\A_1,\ldots,\A_n$ we can define
\begin{equation}
\label{eq:kcompJM}
\G(x_1,\ldots,x_n) = \A_1(x_1) \otimes \ldots \otimes \A_n(x_n)
\end{equation}
and this $\elle{\hh^{\otimes n}}$-valued observable clearly satisfies \eqref{eq:marginal} with $k=n$.
We thus conclude that
\begin{itemize}[leftmargin=*,labelsep=4mm]
\item \emph{Any collection of $n$ observables is $n$-compatible.}
\end{itemize}
Secondly, if $\G$ is a $k$-copy joint observable of $\A_1,\ldots,\A_n$, we get a $k$-copy joint observable of $\A_1,\ldots,\A_{n-1}$ by simply summing over the outcomes in $\Omega_n$.
More generally, we have that
\begin{itemize}[leftmargin=*,labelsep=4mm]
\item \emph{Any subset of a $k$-compatible set of observables is $k$-compatible}.
\end{itemize}
 Finally, if $\G$ is a $k$-copy joint observable, we can trivially extend it to a higher dimensional Hilbert space by setting $\G' = \G \otimes \id$.
Therefore, we conclude that
\begin{itemize}[leftmargin=*,labelsep=4mm]
\item \emph{Any collection of $k$-compatible observables is $k'$-compatible for all $k'\geq k$.}
\end{itemize}

The fourth simple but important property of the $k$-compatibility relation is the following additivity.

\begin{proposition}\label{prop:division}
Let $\mathcal{A}$ be a finite collection of observables and $\mathcal{A}=\mathcal{A}_1 \cup \mathcal{A}_2$ for two nonempty subsets $\mathcal{A}_1$ and $\mathcal{A}_2$.
If $\mathcal{A}_1$ is $k_1$-compatible and $\mathcal{A}_2$ is $k_2$-compatible, then $\mathcal{A}$ is $(k_1+k_2)$-compatible.
\end{proposition}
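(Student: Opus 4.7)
The natural idea is to tensor together joint observables for the two sub-collections. First, I would reduce to the disjoint case: if $\mathcal{A}_1 \cap \mathcal{A}_2 \neq \emptyset$, the subset property (any subset of a $k$-compatible collection is $k$-compatible) shows that $\mathcal{A}_2' = \mathcal{A}_2 \setminus \mathcal{A}_1$ is still $k_2$-compatible, and $\mathcal{A}$ is the disjoint union $\mathcal{A}_1 \sqcup \mathcal{A}_2'$. So I may assume $\mathcal{A}_1$ and $\mathcal{A}_2$ are already disjoint.

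Next, write $\mathcal{A}_1 = \{\A_1, \ldots, \A_m\}$ and $\mathcal{A}_2 = \{\A_{m+1}, \ldots, \A_n\}$ with outcome sets $\Omega_1, \ldots, \Omega_n$. Let $\G_1$ be a $k_1$-copy joint observable for $\mathcal{A}_1$ on $\hi^{\otimes k_1}$, and $\G_2$ a $k_2$-copy joint observable for $\mathcal{A}_2$ on $\hi^{\otimes k_2}$. Define $\G$ on $\hi^{\otimes(k_1+k_2)}$ with outcome set $\Omega_1\times\cdots\times\Omega_n$ by the tensor product
\[
\G(x_1, \ldots, x_n) = \G_1(x_1, \ldots, x_m) \otimes \G_2(x_{m+1}, \ldots, x_n).
\]
Positivity and normalization are immediate since $\sum_{x_1,\ldots,x_m}\G_1 = \id$ and $\sum_{x_{m+1},\ldots,x_n}\G_2 = \id$.

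The marginal check is then routine. For $i \leq m$ and $x \in \Omega_i$, summing over all outcomes other than $x_i$ gives $\G^{[i]}(x) = \G_1^{[i]}(x) \otimes \id^{\otimes k_2}$, because the $\G_2$-outcomes sum to $\id$. Since $\varrho^{\otimes(k_1+k_2)}$ factors as $\varrho^{\otimes k_1} \otimes \varrho^{\otimes k_2}$, the trace splits as $\tr{\varrho^{\otimes k_1}\G_1^{[i]}(x)}\cdot\tr{\varrho^{\otimes k_2}}$, which by the $k_1$-compatibility of $\mathcal{A}_1$ equals $\tr{\varrho\A_i(x)}$; the case $i > m$ is symmetric. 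There is no real obstacle here: the proof is essentially tensor-product bookkeeping, with the only slight subtlety being the disjointness reduction at the beginning.
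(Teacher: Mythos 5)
Your proof is correct and follows essentially the same route as the paper's: reduce to the disjoint case via the subset property and then tensor the two joint observables together. The only micro-point worth noting is the degenerate case $\mathcal{A}_2\subseteq\mathcal{A}_1$ (so $\mathcal{A}_2\setminus\mathcal{A}_1=\emptyset$), which the paper dispatches separately by observing that $\mathcal{A}=\mathcal{A}_1$ is already $k_1$-compatible and hence $(k_1+k_2)$-compatible.
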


\begin{proof}
First, if $\mathcal{A}_i=\mathcal{A}$ for some $i$, then the claim is trivial.
Hence, we assume that $\mathcal{A}_i \neq \mathcal{A}$ for all $i\in\{1,2\}$.
We denote $\mathcal{A}_3=\mathcal{A}_2\smallsetminus \mathcal{A}_1$.
As $\mathcal{A}_3$ is a subset of $\mathcal{A}_2$, it is $k_2$-compatible.
The set $\mathcal{A}$ is a disjoint union of $\mathcal{A}_1$ and $\mathcal{A}_3$, and we can label the observables such that $\mathcal{A}_1=\{\A_1,\ldots,\A_m\}$ and $\mathcal{A}_3=\{\A_{m+1},\ldots,\A_n\}$.
We denote by $\G_1$ and $\G_3$ the $k_1$- and $k_2$- copy joint observables of $\mathcal{A}_1$ and $\mathcal{A}_3$, respectively, and then define

\begin{equation*}
\G(x_1,\ldots,x_n) = \G_1(x_1,\ldots,x_m) \otimes \G_3(x_{m+1},\ldots,x_n) \, .
\end{equation*}
This observable is a $(k_1+k_2)$-copy joint observable of $\mathcal{A}$.
\end{proof}

\subsection{Index of incompatibility}\label{sec:index}

For any set of $n$ observables, it is well defined  the smallest integer $1\leq k \leq n$ such that the collection is $k$-compatible.
This leads to the following notion.

\begin{definition}
\label{def:indexofinc}
The \emph{index of incompatibility} is the minimal number of copies that is needed in order to make a given set of observables compatible.
Hence, for a set of observables $\mathcal{A}$ the index of incompatibility $\ind{\mathcal{A}}$ is

\begin{equation*}
\ind{\mathcal{A}} := \min_k \{ \textrm{$\mathcal{A}$ is $k$-compatible} \} \, .
\end{equation*}
\end{definition}

The usual compatibility corresponds to 1-compatibility, hence the index of incompatibility of a compatible set of observables is $1$.
The index of incompatibility can be taken as an integer valued quantification of the incompatibility of a given set. 
Our earlier observations and Proposition \ref{prop:division} imply the following.
\begin{enumerate}[leftmargin=*,labelsep=3mm,label=(\roman*)]
\item $1 \leq \ind{\mathcal{A}} \leq \#\mathcal{A}$;
\item if $\mathcal{A}\subseteq\mathcal{B}$, then  $\ind{\mathcal{A}} \leq \ind{\mathcal{B}}$;
\item $\ind{\mathcal{A}_1 \cup \mathcal{A}_2} \leq \ind{\mathcal{A}_1} + \ind{\mathcal{A}_2}$;
\item $\ind{\mathcal{A}}=1$ if and only if $\mathcal{A}$ is compatible.
\end{enumerate}

It is not clear from this definition if for each integer $n\geq 2$ there exists a set $\mathcal{A}$ of $n$ observables such that the index $\ind{\mathcal{A}}$ has the maximal value $n$. 
In Section \ref{sec:qubitcase} we will show that there exists a triplet of observables whose index of incompatibility is $3$.

\section{Compatibility stack}
\label{sec:stack}

\subsection{Definition}
Although the index of incompatibility gives a simple quantification of the incompatibility of a set of observables, it does not take into account the finer compatibility structures present in the set. This calls for a more refined description of the various compatibility relations between the observables. In the usual single copy scenario, this can be conveniently done in terms of \emph{joint measurability hypergraphs} \cite{KuHeFr14}. 

In general, a \emph{hypergraph} is a pair $(V,E)$ consisting of a set $V$ and a set $E$ of non-empty subsets of $V$.
The elements of $V$ are called \emph{vertices} and the elements of $E$ \emph{edges}. 
Following \cite{KuHeFr14}, we say that a hypergraph $(V,E)$ is a joint measurability hypergraph (or compatibility hypergraph) if all non-empty subsets of edges are also edges, i.e.,
\begin{equation*}
\emptyset \neq \mathcal{A}' \subseteq \mathcal{A} \in E \Rightarrow \mathcal{A}' \in E \, .
\end{equation*}
Every set of observables gives rise to a joint measurability hypergraph where the vertices represent the observables, and the edges linking some particular vertices represent the compatibility of the corresponding observables.  The above condition then states that compatibility of some set of observables implies compatibility of any subset of these observables. 
Furthermore, it was shown in \cite{KuHeFr14} that every abstract joint measurability hypergraph where all the singleton sets are edges has such a realization in terms of quantum observables.

The generalization of this approach to the case of $k$-compatibility is given by the following notion.

\begin{definition}
\label{def:stack}
Let $V$ be a finite set with $n$ elements and let $E_k$ be a set of non-empty subsets of $V$ for $k=1,\ldots,n$.
We denote $H_k=(V,E_k)$.
The list $(H_1,\ldots,H_n)$ of hypergraphs is a \emph{compatibility stack} if
\begin{itemize}[leftmargin=*,labelsep=4mm]
\item[(S1)] each $H_k=(V,E_k)$ is a joint measurability hypergraph,
\item[(S2)] $E_1$ contains all singleton sets and $E_n =2^V$, and
\item[(S3)] if $\mathcal{A}\in E_k$ and $\mathcal{B}\in E_l$, then $\mathcal{A}\cup\mathcal{B}\in E_{k+l}$.
\end{itemize}
\end{definition}

The motivation for the previous definition is that 
\emph{any finite set of observables gives rise to a compatibility stack}.
Namely, let $V=\{\A_1, \ldots, \A_n\}$ be a finite set of observables.
We take these observables as vertices, and a set of edges $E_k$ is defined in a way that a subset $\mathcal{A}\subseteq V$ belongs to $E_k$ if $\mathcal{A}$ is $k$-compatible.
The conditions (S1)--(S3) hold by our earlier discussion. 
First, every subset of a $k$-compatible set of observables is also $k$-compatible, hence $(V,E_k)$ is a joint measurability hypergraph.
Second, each set made of one observable is $1$-compatible, hence $E_1$ contains all singleton sets. 
The condition $E_n=2^V$ follows from the facts that any set of $n$ observables is $n$-compatible, and any subset of $n$-compatible observables is also $n$-compatible.
Finally, Proposition \ref{prop:division} is reflected in condition (S3). 

\begin{proposition}
\label{prop:stack}
For a compatibility stack $(H_1,\ldots,H_n)$ the following hold:
\begin{itemize}[leftmargin=*,labelsep=3mm]
\item[(1)] $E_1 \subseteq E_2 \subseteq \cdots \subseteq E_n$
\item[(2)] For each $k=1,\ldots,n$, the set $E_k$ contains all subsets of $V$ of order $k$.
\end{itemize}
\end{proposition}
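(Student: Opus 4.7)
The plan is to derive both assertions directly from the axioms (S1)--(S3), using the singletons guaranteed by (S2) as ``building blocks'' that allow us to move up the stack via (S3) and then come back down via (S1).

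For part (1), I fix $k\in\{1,\ldots,n-1\}$ and take an arbitrary $\mathcal{A}\in E_k$. Since $\mathcal{A}$ is nonempty, pick any $v\in \mathcal{A}$ (or in fact any $v\in V$). Axiom (S2) gives $\{v\}\in E_1$, and then (S3) applied to $\mathcal{A}\in E_k$ and $\{v\}\in E_1$ yields $\mathcal{A}\cup\{v\}\in E_{k+1}$. Finally, since $\mathcal{A}$ is a nonempty subset of $\mathcal{A}\cup\{v\}$, the joint measurability property (S1) for $H_{k+1}$ gives $\mathcal{A}\in E_{k+1}$. This proves $E_k\subseteq E_{k+1}$, and iterating gives the full chain of inclusions.

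For part (2), I argue by induction on $k$. The base case $k=1$ is exactly the assertion in (S2) that $E_1$ contains all singletons. For the inductive step, suppose every subset of $V$ of order $k$ lies in $E_k$, and let $\mathcal{A}\subseteq V$ have cardinality $k+1$. Write $\mathcal{A}=\mathcal{A}'\cup\{v\}$ with $|\mathcal{A}'|=k$ and $v\notin\mathcal{A}'$. By the inductive hypothesis $\mathcal{A}'\in E_k$, and by (S2) $\{v\}\in E_1$, so (S3) yields $\mathcal{A}=\mathcal{A}'\cup\{v\}\in E_{k+1}$. This completes the induction.

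There is no genuine obstacle here: both parts are mechanical consequences of the axioms once one notices that (S2) provides a reservoir of singleton edges in $E_1$, which act as the unit for the ``additive'' operation in (S3). The only minor care required is to use (S1) on nonempty subsets (for part (1)) and to make sure (S3) is applied to two disjoint, nonempty pieces whose union is the desired edge (for part (2)).
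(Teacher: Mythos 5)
Your proof is correct and follows essentially the same route as the paper: part (1) by adjoining a singleton from (S2) via (S3) (the paper simply notes $\mathcal{A}\cup\{\A\}=\mathcal{A}$ for $\A\in\mathcal{A}$, making your extra appeal to (S1) unnecessary but harmless), and part (2) by the same induction peeling off one vertex at a time.
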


\begin{proof}
(1) Fix $k=2,\ldots,n$. Suppose $\mathcal{A}\in E_{k-1}$, and pick $\A\in\mathcal{A}$. We have $\{\A\}\in E_1$ by (S2).
Then $\mathcal{A} = \mathcal{A} \cup \{\A\}$ and hence $\mathcal{A}\in E_{k}$ by (S3).
Therefore, $E_{k-1} \subseteq E_{k}$.

(2) This follows by  induction. Indeed, by (S2) the claim is true for $E_1$. If $\mathcal{A}$ is an order $k+1$ subset of $V$ and $\A\in\mathcal{A}$, then $\mathcal{A}\setminus\{\A\}\in E_k$ by the inductive hypothesis, hence $\mathcal{A} = \{\A\}\cup(\mathcal{A}\setminus\{\A\}) \in E_{k+1}$ by (S3).
\end{proof}

The condition (1) abstractly reflects the understanding that if a set of observables is $k$-compatible, it is $(k+1)$-compatible as well. The condition (2) is, on the other hand, saying that any collection of $k$ observables is $k$-compatible.

For a compatibility stack $(H_1,\ldots,H_n)$ consisting of hypergraphs $H_k=(V,E_k)$, we say that the \emph{index} of a non-empty subset $\mathcal{A}\subseteq V$ is the smallest integer $j$ such that $\mathcal{A}\in E_j$.
If the compatibility stack represents the $k$-compatibility relations of a set of observables, then the index of $\mathcal{A}$ is exactly the index of incompatibility as given by Definition~\ref{def:indexofinc}. It is clear that the normalization, monotonicity and subadditivity properties (i)--(iii) of the index of incompatibility are still retained in this abstract setting.

\subsection{Compatibility stacks with three vertices}

\begin{figure}
    \centering
    \subfigure[]
    {
         \includegraphics[width=3cm]{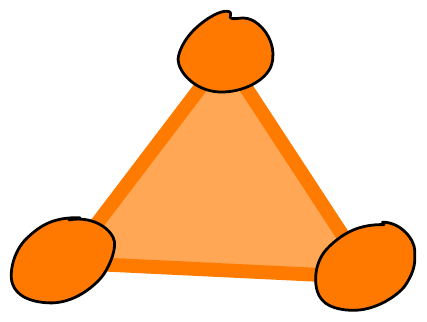}      }
    \subfigure[]
    {
        \includegraphics[width=3cm]{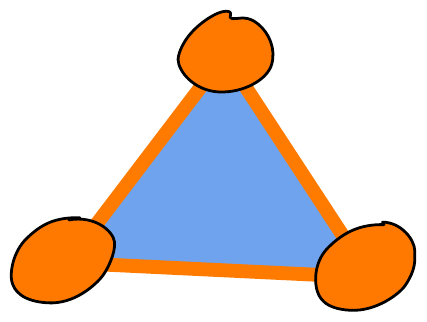} 
    }
     \subfigure[]
    {
        \includegraphics[width=3cm]{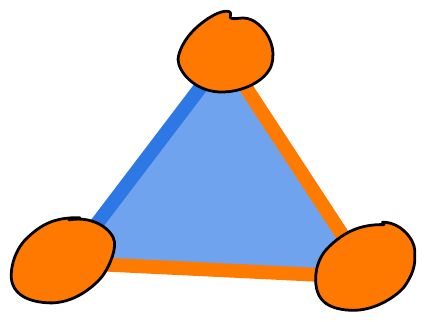} 
    } \subfigure[]
    {
        \includegraphics[width=3cm]{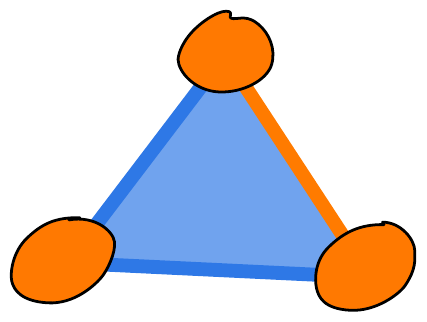} 
    } \subfigure[]
    {
        \includegraphics[width=3cm]{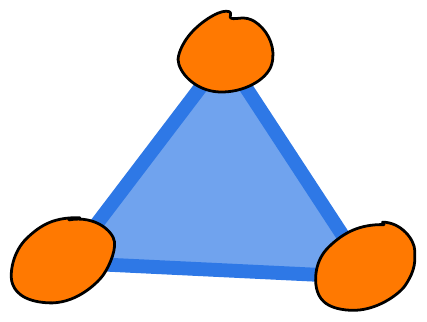} 
    }
     \subfigure[]
    {
        \includegraphics[width=3cm]{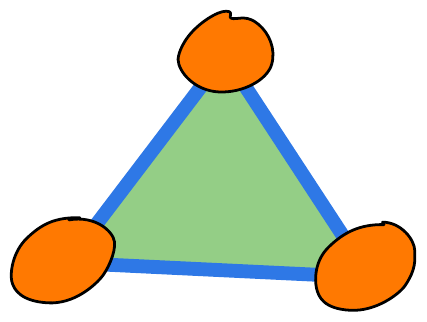} 
    }
        \caption{All possible compatibility stacks with three vertices. Orange color marks index 1, blue marks index 2 and green marks index 3.}
    \label{fig:possible}
\end{figure}

The simplest (non-trivial) example of a compatibility stack is the case of three vertices $\A$, $\B$, $\C$.
A graphical representation of a compatibility stack $(H_1,H_2,H_3)$ is a triangle, where the edges and the area can be colored according to the corresponding index. 
The situation is depicted in Figure~\ref{fig:possible} for all possible compatibility stacks and in Figure~\ref{fig:impossible} for some impossible cases.

For example, in the case of Figure~\ref{fig:possible}.(a), the compatibility stack is given by $H_1^{(\mathrm{a})}=H_2^{(\mathrm{a})}=H_3^{(\mathrm{a})}=(V,2^V)$ with $V=\{\A,\B,\C\}$ being the set of vertices.
For the cases (b) through (e) we still have $H_2^{\text{(b)--(e)}}=H_3^{\text{(b)--(e)}}=(V,2^V)$ but for $H_1^{\text{(b)--(e)}}$ its set $E_1$ has fewer and fewer elements. In case (e) $H_1^{(\mathrm{e})}$ becomes simply $H_1^{(\mathrm{e})}=(V,\{\{\A\},\{\B\},\{\C\}\})$. Finally, in the case (f) the compatibility stack is given as
\begin{align*}
H_1^{(\mathrm{f})} &= (V,\{\{\A\},\{\B\},\{\C\}\}),\\
H_2^{(\mathrm{f})} &= (V,\{\{\A\},\{\B\},\{\C\},\{\A,\B\},\{\A,\C\},\{\B,\C\}\}),\\
H_3^{(\mathrm{f})} &= (V,2^V).
\end{align*}

The case of Figure~\ref{fig:impossible}.(a) is on first sight representable by a stack
\begin{align*}
H_1^{(\mathrm{a})\prime} &= (V,\{\{\A\},\{\B\},\{\C\},\{\A,\B\}\}),\\
H_2^{(\mathrm{a})\prime} &= (V,\{\{\A\},\{\B\},\{\C\},\{\A,\B\},\{\A,\C\},\{\B,\C\}\}),\\
H_3^{(\mathrm{a})\prime} &= (V,2^V).
\end{align*}
However, its impossibility comes from the fact that $E_1$ contains both $\{\C\}$ and $\{\A,\B\}$, which by (S3) would require $E_2$ to contain the set $\{\A,\B,\C\}$, which is not the case. Similar discussion is valid also for the case (b). 

The  fact that some collections of hypergraphs are not  compatibility stacks comes from the fact that Definition \ref{def:stack} puts limitations on the indeces of the hypergraph edges. In particular, condition (S3) reduces the number of possible compatibility stacks more than Proposition \ref{prop:stack} alone. 
We shall make this clear in the discussion of four vertices below.

\begin{figure}
    \centering
    \subfigure[]
    {
         \includegraphics[width=3cm]{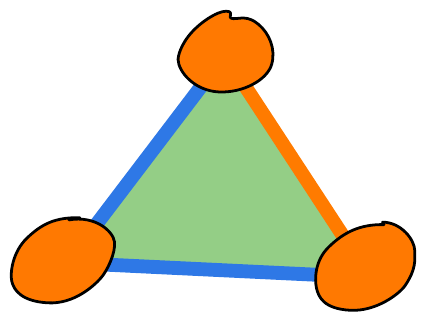}      }
    \subfigure[]
    {
        \includegraphics[width=3cm]{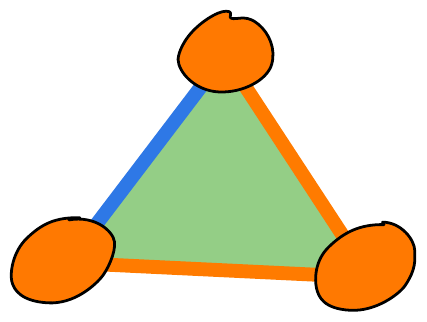} 
    }
    \caption{Examples of impossible $k$-compatibility relations for three observable. Orange color marks index 1, blue marks index 2 and green marks index 3.}
    \label{fig:impossible}
\end{figure}

\subsection{Compatibility stacks with four vertices}

Having four vertices increases the number of possible compatibility stacks considerably. 
Let the four vertices be denoted as $\A$, $\B$, $\C$ and $\D$. 
For any pair of these vertices, the remaining pair will be called \emph{reciprocal}, e.g.~$\{\B,\C\}$ is reciprocal to $\{\A,\D\}$.
The four vertices can be illustrated as the vertices of a tetrahedron. In this representation, different types of graph edges\footnote{In this section, edges of the graph will be always denoted as \emph{graph edges}, while the physical edges of the tetrahedron will be just \emph{edges}.} correspond to different elements of the tetrahedron (vertices, edges, sides and bulk), with each of these possibly having a different index. We will abuse the language a bit by saying that particular elements of the tetrahedron are $k$-compatible, meaning that the corresponding graph edges have index $\leq k$.

The coarsest classification of compatibility stacks is by the index of the bulk.
The case of index-1 bulk is possible only when also all sides and edges have index 1 (Figure~\ref{fig:4possible}.(a)), since from Proposition \ref{prop:stack}.(1) we have $2^V=E_1\subseteq E_2 \subseteq \cdots \subseteq E_n$ implying the equality of all the sets $E_k$.
On the other hand, the bulk can have index 4 only when all sides have index 3 and edges have index 2 (Figure~\ref{fig:4possible}.(b)). Indeed, if on the contrary some edge would have index 1, then, by $2$-compatibility of its reciprocal edge, the bulk would be $3$-compatible by (S3). A similar consideration holds for the side indexes.

\begin{figure}
    \centering
    \subfigure[]
    {
         \includegraphics[width=4cm]{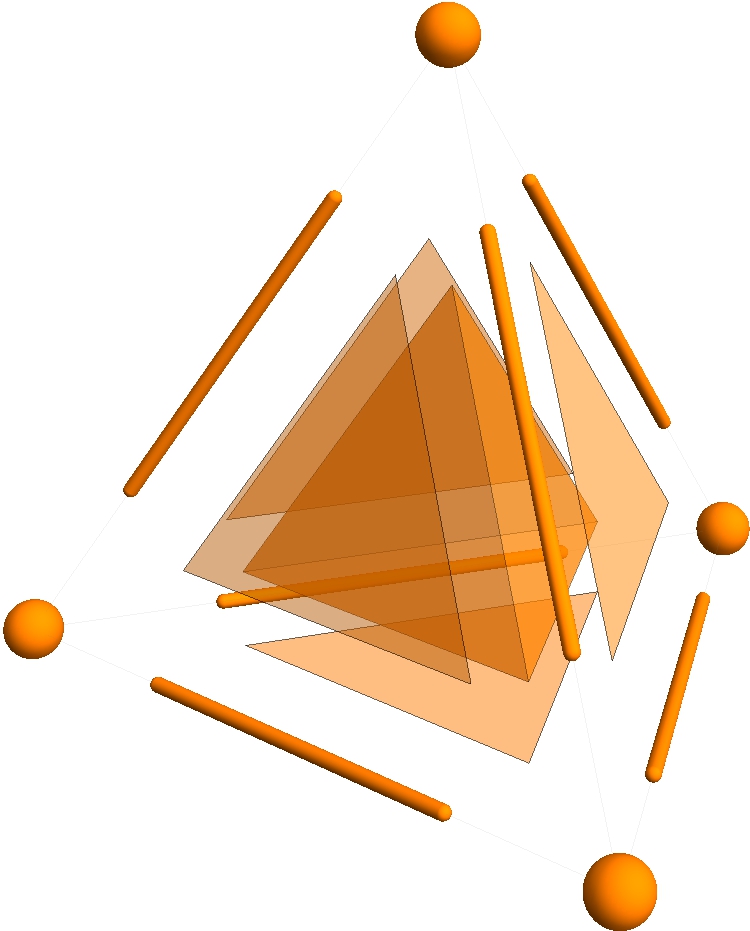}      }
    \subfigure[]
    {
         \includegraphics[width=4cm]{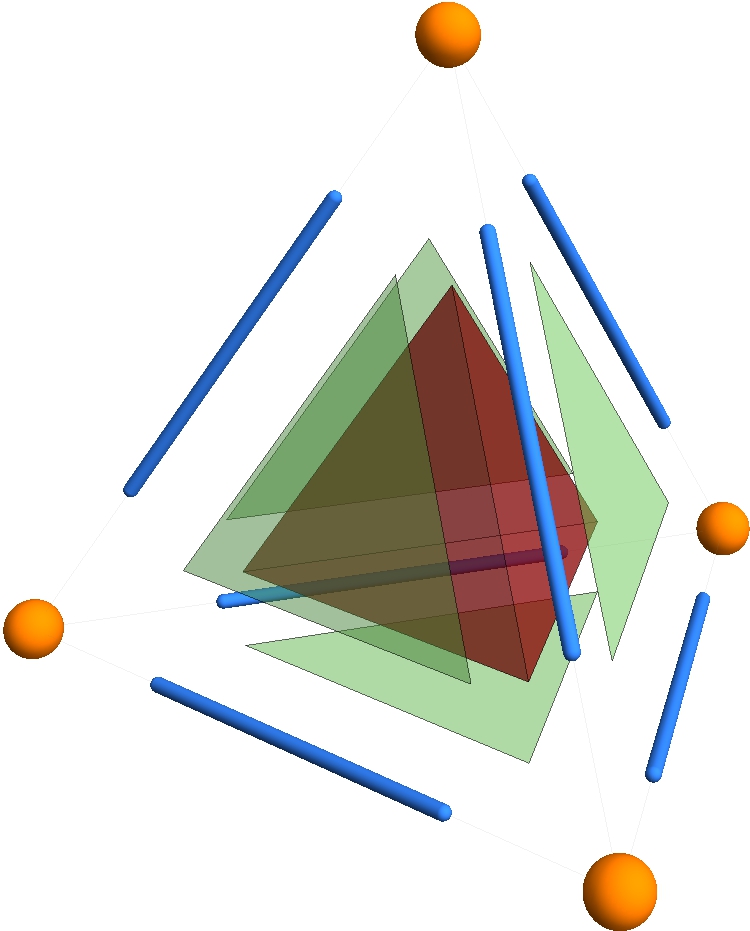}      }
    \subfigure[]
    {
         \includegraphics[width=4cm]{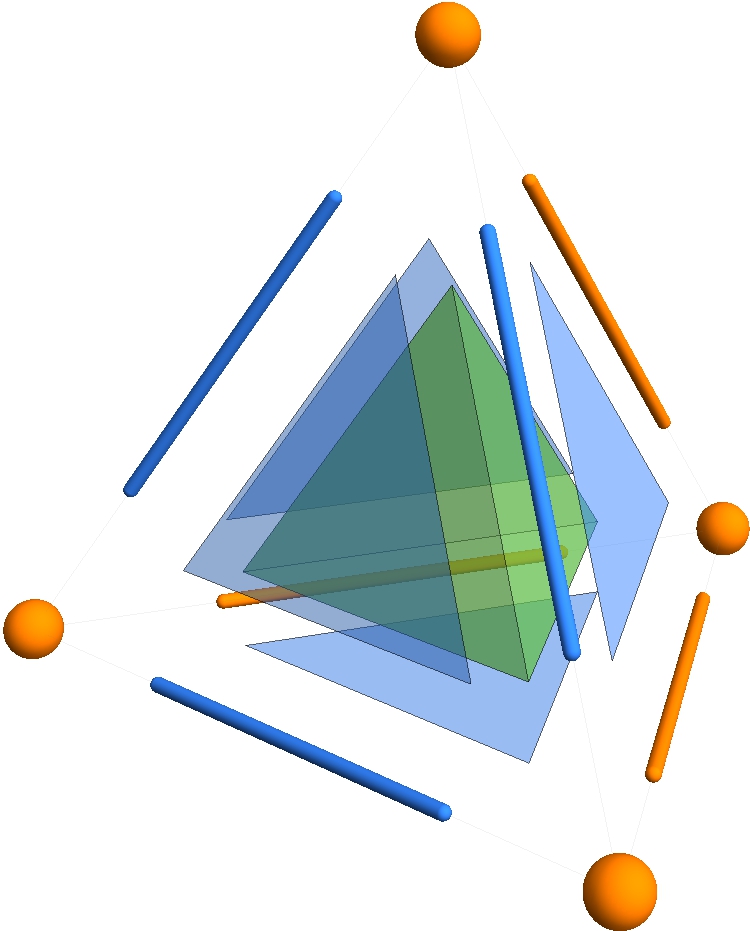}      }
    \subfigure[]
    {
        \includegraphics[width=4cm]{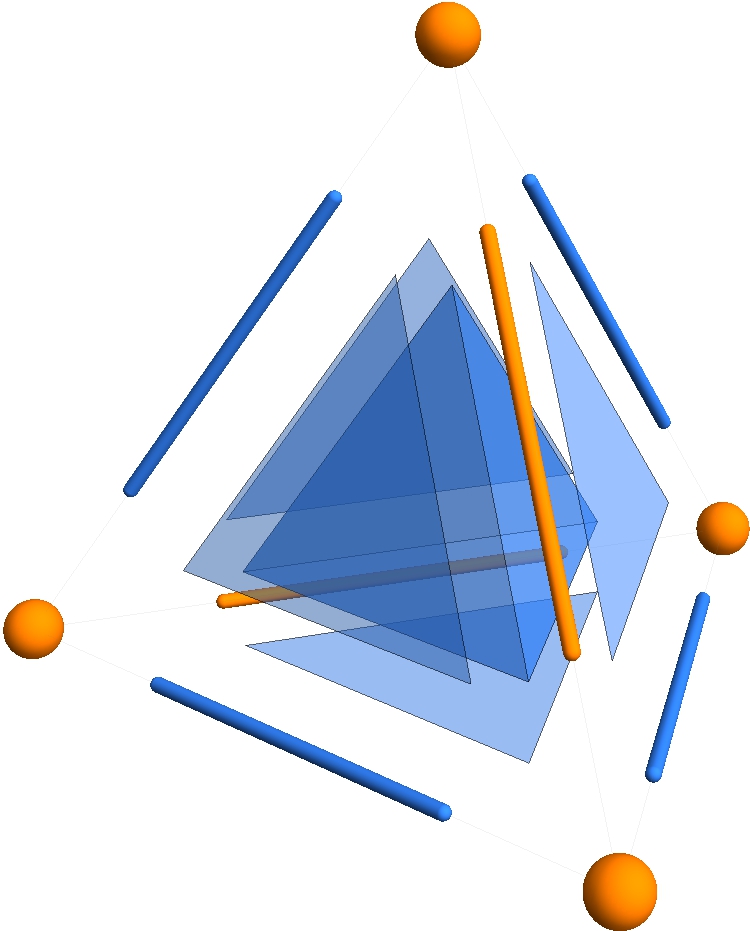} 
    }
    \caption{Compatibility stacks with four vertices can be represented by colored tetrahedrons. 
    As before, orange color marks index 1, blue marks index 2 and green marks index 3. In addition, index 4 is marked by red color.}
    \label{fig:4possible}
\end{figure}

The cases in between (as e.g.~Figure~\ref{fig:4possible}.(c)) are more populated and their number is reduced by the compatibility stack condition (S3). 
Particularly useful is the following consequence of the subadditivity of the index.
\begin{itemize}[leftmargin=*,labelsep=4mm]
\item \emph{If two sets composed of  reciprocal pairs have index 1, then the set of all four vertices has index $\leq 2$.}
\end{itemize}
Visually on the tetrahedron this means that if two opposing edges are compatible, then the bulk is 2-compatible (see e.g.~Figure~\ref{fig:4possible}.(d)). This observation is also easily intuitively grasped, as, if we have two pairs of compatible observables -- let us say $\{\B,\C\}$ and $\{\A,\D\}$ --, then there exist corresponding joint observables $\F$ and $\G$ respectively. These two observables are always $2$-compatible and, hence, also the four observables $\A$, $\B$, $\C$ and $\D$ are 2-compatible.

Definition~\ref{def:stack} of compatibility stack and Proposition~\ref{prop:stack} lead to the possibilities depicted in Table \ref{tab:4obs}. 
However, it is an open question if all these compatibility stacks have realizations by quantum observables.

\begin{table}
\begin{center}
\begin{tabular}{|c|ccccccc|}
\hline
\# of index-2 edges $\filledmedtriangleright$ & \ 0\ & \ 1\ & \ 2\ & \ 3\ & \ 4\ & \ 5\ & \ 6\ \\
Bulk index $\filledmedtriangledown$ & & & & & & & \\
\hline
1 & 1 & -- & -- & -- & -- & -- & --\\
2 & 5 & 3 & 3 & 4 & 2 & 1 & 1\\
3 & -- & -- & -- & 3 & 2 & 3 & 5\\
4 & -- & -- & -- & -- & -- & -- & 1\\
\hline
\end{tabular}
\end{center}

\medskip

\caption{All possible compatibility stacks enumerated by their bulk index and number of edges with index 2.  Altogether 34 different stacks (up to trivial permutations) are possible.}
\label{tab:4obs}
\end{table}

\section{Structure of $k$-copy joint observables}
\label{sec:structure}

In this section, we show that, in order to find the index of incompatibility of any collection of observables $\mathcal{A} = \{\A_1,\ldots,\A_n\}$, it is enough to characterize all {\em symmetric} $k$-copy joint observables of $\A_1,\ldots,\A_n$. In particular, we prove that the $k$-compatibility of $\A_1,\ldots,\A_n$ is equivalent to the usual compatibility of their symmetrized versions $\widetilde{\A}_1,\ldots,\widetilde{\A}_n$ in $\elle{\hh^{\otimes k}}$. This reduces the $k$-compatibility problem to a standard compatibility problem on an enlarged quantum system.

\subsection{Symmetric product}

The symmetric group $S_k$ acts in a natural way on the tensor product $\hh^{\otimes k}$ of $k$ copies of $\hh$: if $p\in S_k$ is any permutation, its action on a decomposable element $\psi_1\otimes\ldots\otimes\psi_k\in\hh^{\otimes k}$ is defined as
$$
\sigma(p)(\psi_1\otimes\ldots\otimes\psi_k) = \psi_{p^{-1}(1)}\otimes\ldots\otimes\psi_{p^{-1}(k)} \,.
$$
The map $\sigma:S_k\to\elle{\hh^{\otimes k}}$ is a unitary representation of $S_k$ on $\hh^{\otimes k}$.
 
Using this unitary representation, we then define the {\em symmetrizer channel} $\Sigma_k$ on $\elle{\hh^{\otimes k}}$ as

\begin{equation*}
\Sigma_k(\tA) = \frac{1}{k!} \sum_{p\in S_k}\sigma(p)\tA\sigma(p)^* \qquad \tA\in\elle{\hh^{\otimes k}} \, .
\end{equation*}
This map is completely positive and unital, hence it is a quantum channel. On decomposable operators $\tA = A_1\otimes\ldots\otimes A_k$, we have
$$
\Sigma_k(A_1\otimes\ldots\otimes A_k) = \frac{1}{k!} \sum_{p\in S_k} A_{p(1)}\otimes\ldots\otimes A_{p(k)} \,,
$$
hence $\Sigma_k$ is an idempotent projection onto the linear subspace $\Sym(k,\lh)$ of the $k$-symmetric tensor operators in $\elle{\hh^{\otimes k}} = \lh^{\otimes k}$. It becomes an orthogonal projection by endowing $\elle{\hh^{\otimes k}}$ with the Hilbert-Schmidt inner product $\ip{\tA}{\tB}_{HS} = \tr{\tA^*\tB}$.

The {\em symmetric product} of two operators $\tA_1\in\Sym(k_1,\lh)$ and $\tA_2\in\Sym(k_2,\lh)$ is the operator $\tA_1\odot\tA_2\in\Sym(k_1+k_2,\lh)$ with
$$
\tA_1\odot\tA_2 = \Sigma_{k_1+k_2}(\tA_1\otimes\tA_2)\,.
$$
The symmetric product is associative and commutative. 

We will constantly use the following, easily verifiable, formula:
if $A_1,\ldots,A_k,B_1,\ldots,B_k\in\lh$, then
$$
\ip{A_1\odot\ldots\odot A_k}{B_1\odot\ldots\odot B_k}_{HS} = \frac{1}{k!} \sum_{p\in S_k} \tr{A_1 B_{p(1)}}\cdots\tr{A_k B_{p(k)}} \,.
$$

\subsection{Structure theorem}
We will now prove the following theorem which shows that for a $k$-compatible set of observables, the set of $k$-copy joint observables always contains a symmetric observable.

\begin{theorem}\label{prop:k-comp}
The $\lh$-valued observables $\A_1,\ldots,\A_n$ on $\Omega_1,\ldots,\Omega_n$ are $k$-compatible if and only if there exists a $\elle{\hh^{\otimes k}}$-valued observable $\widetilde{\G}$ on $\Omega_1\times\ldots\times\Omega_n$ such that
\begin{equation}\label{eq:joint-2}
\widetilde{\G}^{[i]}(x) = \id^{\otimes (k-1)}\odot\A_i(x) \qquad \forall i=1,\ldots,n,\, x\in\Omega_i \, .
\end{equation}
In this case, we can choose $\widetilde{\G}$ such that $\widetilde{\G}(x_1,\ldots,x_n)\in\Sym(k,\lh)$ for all $x_1,\ldots,x_n$.
\end{theorem}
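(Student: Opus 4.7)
The plan is: given any $k$-copy joint observable $\G$ of $\A_1,\ldots,\A_n$, define $\widetilde{\G}(x_1,\ldots,x_n) := \Sigma_k(\G(x_1,\ldots,x_n))$. Since $\Sigma_k$ is a unital completely positive map whose image is $\Sym(k,\lh)$, this $\widetilde{\G}$ is automatically a symmetric observable, which handles the last sentence of the theorem. The whole content of the equivalence is then that such $\widetilde{\G}$ satisfies \eqref{eq:joint-2} if and only if $\G$ satisfies \eqref{eq:marginal}.

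The ``if'' direction is essentially a one-line computation. Assuming $\widetilde{\G}^{[i]}(x) = \id^{\otimes(k-1)}\odot\A_i(x)$, the fact that $\varrho^{\otimes k}\in\Sym(k,\lh)$ together with the self-adjointness of $\Sigma_k$ with respect to the Hilbert--Schmidt inner product gives
\[
\tr{\varrho^{\otimes k}\widetilde{\G}^{[i]}(x)} = \tr{\varrho^{\otimes k}\,\Sigma_k\bigl(\id^{\otimes(k-1)}\otimes\A_i(x)\bigr)} = \tr{\varrho^{\otimes k}\bigl(\id^{\otimes(k-1)}\otimes\A_i(x)\bigr)} = \tr{\varrho\A_i(x)},
\]
which is \eqref{eq:marginal}.

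For the ``only if'' direction, using linearity of marginals I have $\widetilde{\G}^{[i]}(x) = \Sigma_k(\G^{[i]}(x))$, so the task is to prove
\[
\Sigma_k\bigl(\G^{[i]}(x) - \id^{\otimes(k-1)}\otimes\A_i(x)\bigr) = 0.
\]
Rewriting \eqref{eq:marginal} as $\tr{\varrho^{\otimes k}(\G^{[i]}(x) - \id^{\otimes(k-1)}\otimes\A_i(x))} = 0$ for every state $\varrho$, the operator in parentheses is HS-orthogonal to every $\varrho^{\otimes k}$. Since $\Sigma_k$ is the HS-orthogonal projection onto $\Sym(k,\lh)$, the displayed identity is equivalent to the density statement that the complex linear span of $\{\varrho^{\otimes k}:\varrho\in\sh\}$ equals $\Sym(k,\lh)$.

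The main technical step is establishing this density, which I would do by polarization. Rank-one projectors $\kb{\psi}{\psi}$ span $\lh$, so their $k$-fold symmetric products span $\Sym(k,\lh)$; hence it is enough to realize each $\varrho_1\odot\ldots\odot\varrho_k$ with $\varrho_i\in\sh$ as a complex linear combination of operators of the form $\sigma^{\otimes k}$ with $\sigma\in\sh$. This is exactly the content of the polarization identity
\[
k!\,\varrho_1\odot\ldots\odot\varrho_k = \sum_{\emptyset\neq S\subseteq\{1,\ldots,k\}}(-1)^{k-\abs{S}}\abs{S}^k\,\sigma_S^{\otimes k},\qquad \sigma_S := \frac{1}{\abs{S}}\sum_{i\in S}\varrho_i,
\]
where each $\sigma_S$ is a genuine state because it is a convex combination of the $\varrho_i$'s. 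This closes the argument and yields the theorem.
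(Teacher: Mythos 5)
Your proof is correct, and its key step goes by a genuinely different route than the paper's. Both arguments ultimately rest on the same linear-algebraic fact --- that the functionals $G\mapsto\tr{\varrho^{\otimes k}G}$, as $\varrho$ ranges over the states, separate the points of $\Sym(k,\lh)$, i.e.\ that the $\varrho^{\otimes k}$ span $\Sym(k,\lh)$ --- but you prove it directly, while the paper proves the dual statement by explicit computation. Concretely, the paper substitutes $\varrho=\id/d+t\Delta$, expands $\tr{\varrho^{\otimes k}G}$ as a polynomial in $t$, compares coefficients, and then reads off $G=\id^{\otimes(k-1)}\odot\A_i(x)$ against an explicit orthonormal basis of $\Sym(k,\lh)$ built from symmetric products of a traceless self-adjoint orthonormal family $T_1,\ldots,T_D$. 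You instead observe that \eqref{eq:marginal} says $\G^{[i]}(x)-\id^{\otimes(k-1)}\otimes\A_i(x)$ is Hilbert--Schmidt orthogonal to every $\varrho^{\otimes k}$, so it suffices to know that these operators span $\Sym(k,\lh)$, which you obtain from the polarization identity for the symmetric multilinear map $(\varrho_1,\ldots,\varrho_k)\mapsto\varrho_1\odot\cdots\odot\varrho_k$ (your identity and the convexity argument making each $\sigma_S$ a state are both correct). Your route is shorter and more conceptual, and it cleanly isolates the one nontrivial ingredient as a span statement; the paper's computation is heavier but yields as a by-product the explicit orthonormal basis of $\Sym(k,\lh)$ and the explicit form of an arbitrary symmetric marginal, which it reuses in spirit in the qubit analysis of Section 6. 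The remaining ingredients --- defining $\widetilde{\G}=\Sigma_k\circ\G$, the unitality/positivity of $\Sigma_k$, the commutation of $\Sigma_k$ with taking marginals, and the one-line sufficiency computation --- coincide with the paper's.
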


\begin{proof}
Sufficiency is easy, because any observable $\widetilde{\G}$ as in \eqref{eq:joint-2} satisfies
\begin{align*}
\tr{\varrho^{\otimes k}\widetilde{\G}^{[i]}(x)} & = \ip{\varrho^{\otimes k}}{\Sigma_k(\id^{\otimes (k-1)}\otimes\A_i(x))}_{HS} \\
& = \ip{\Sigma_k(\varrho^{\otimes k})}{\id^{\otimes (k-1)}\otimes\A_i(x)}_{HS} \\
& = \ip{\varrho^{\otimes k}}{\id^{\otimes (k-1)}\otimes\A_i(x)}_{HS} \\
& = \tr{\varrho\A_i(x)} \,,
\end{align*}
which is \eqref{eq:marginal}.

Conversely, suppose that \eqref{eq:marginal} holds for $\G$, and let $\widetilde{\G} = \Sigma_k\circ\G$. Then $\widetilde{\G}$ is an observable on $\Omega_1\times\ldots\times\Omega_n$ which is such that $\widetilde{\G}(x_1,\ldots,x_n)\in\Sym(k,\lh)$ for all $x_1,\ldots,x_n$. Denote $G = \widetilde{\G}^{[i]}(x)$. We have
\begin{equation*}
\begin{aligned}
\tr{\varrho^{\otimes k} G} & = \ip{\varrho^{\otimes k}}{\Sigma_k(\G^{[i]}(x))}_{HS} = \ip{\varrho^{\otimes k}}{\G^{[i]}(x)}_{HS} \\
& = \tr{\varrho\A_i(x_i)}
\end{aligned}
\end{equation*}
by \eqref{eq:marginal}. Choosing the state $\varrho = \id/d + t\Delta$, where $\abs{t}\leq  1/(d\no{\Delta})$ and $\Delta = \Delta^*$ with $\tr{\Delta} = 0$, the last equation gives
\begin{align*}
& \sum_{j=0}^k \Bin{k}{j} \tr{((\id/d)^{\otimes (k-j)}\odot\Delta^{\otimes j}) G} \, t^j = \\
& \qquad \qquad \qquad \qquad = \tr{(\id/d)\A_i(x)} + \tr{\Delta\A_i(x)}\, t \,.
\end{align*}
Comparing the coefficients of the same degree in $t$, we obtain the system of equations
\begin{align}
& d^{-k}\tr{G} = d^{-1}\tr{\A_i(x)} \label{eq:syst1}\\
& d^{-(k-1)}k\,\ip{\id^{\otimes (k-1)}\odot\Delta}{G}_{HS} = \ip{\Delta}{\A_i(x)}_{HS} \label{eq:syst2}\\
& \Bin{k}{j} d^{-(k-j)} \ip{\id^{\otimes (k-j)}\odot\Delta^{\otimes j}}{G}_{HS} = 0 \qquad \forall j\in\{2,3,\ldots,k\} \label{eq:syst3}
\end{align}
which must hold for all $\Delta\in\elle{\hh^{\otimes k}}$ with $\Delta^* = \Delta$ and $\tr{\Delta} = 0$. Now, take a set of $D=d^2-1$ opeartors $T_1,\ldots,T_D\in\lh$ such that $T_r^* = T_r$, $\tr{T_r} = 0$ and $\tr{T_r T_s} = \delta_{rs}$ for all $r,s=1,\ldots,D$, and write $\Delta = x_1 T_1+\ldots+x_D T_D$ for $x_1,\ldots,x_D\in\real$. Then \eqref{eq:syst3} yields
$$
\sum_{j_1+\ldots+j_D = j} \Bin{j}{j_1 \ \ldots \ j_D} x_1^{j_1}\ldots x_D^{j_D}\ip{\id^{\otimes (k-j)}\odot T_1^{\otimes j_1}\odot\ldots\odot T_D^{\otimes j_D}}{G}_{HS} = 0
$$
for all $j\in\{2,3,\ldots,k\}$. This equality holds for all $x_1,\ldots,x_D\in\real$, hence the coefficient of any monomial $x_1^{j_1}\ldots x_D^{j_D}$ must vanish. Since the operators
$$
\left\{ \Bin{j}{j_1 \ \ldots \ j_D}^{\frac{1}{2}} d^{\frac{j-k}{2}}\, \id^{\otimes (k-j)}\odot T_1^{\otimes j_1}\odot\ldots\odot T_D^{\otimes j_D} \mid 0\leq j \leq k ,\, j_1+\ldots+j_D = j \right\}
$$
constitute an orthonormal basis of $\Sym(k,\lh)$, it follows that
$$
G = a \id^{\otimes k} + \sum_{r=1}^D b_r \id^{\otimes (k-1)}\odot T_r = a \id^{\otimes k} + \id^{\otimes (k-1)}\odot T \,,
$$
where $a\in\real$ and $T\in\lh$ is a trace $0$ selfadjoint operator. By \eqref{eq:syst1},
$$
a = d^{-1} \tr{\A_i(x)} \,,
$$
and, by \eqref{eq:syst2},
$$
\tr{\Delta T} = \tr{\Delta\A_i(x)} \,.
$$
The last equation holds for all trace $0$ selfadjoint operators $\Delta$, hence $T = \A_i(x) - d^{-1}\tr{\A_i(x)} \,\id$. In conclusion,
$$
G = \id^{\otimes (k-1)}\odot \A_i(x) \,,
$$
which is \eqref{eq:joint-2}.
\end{proof}

Equation \eqref{eq:joint-2} should be compared with the usual compatibility, which requires that 
\begin{equation}\label{eq:joint-1}
\G^{[i]}(x) = \A_i(x) \qquad \forall i = 1,\ldots,n , \, x\in\Omega_i \, .
\end{equation}
There is one essential difference. 
While in the case of compatibility every joint observable satisfies \eqref{eq:joint-1}, in the case of $k$-compatibility not every joint observable satisfy \eqref{eq:joint-2} but there is always at least one which does.

\begin{corollary}\label{cor:kcomp}
The $\lh$-valued observables $\A_1,\ldots,\A_n$ are $k$-compatible if and only if the $\elle{\hh^{\otimes k}}$-valued observables $\widetilde{\A}_1,\ldots,\widetilde{\A}_n$ are compatible, where
\begin{equation*}
\widetilde{\A}_i(x) = \id^{\otimes (k-1)}\odot\A_i(x) \, .
\end{equation*}
\end{corollary}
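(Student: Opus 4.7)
The plan is to recognize that Corollary \ref{cor:kcomp} is essentially a restatement of Theorem \ref{prop:k-comp} in the language of ordinary compatibility on the enlarged Hilbert space $\hh^{\otimes k}$. No new ideas are required beyond a direct translation, once it is verified that the objects $\widetilde{\A}_i$ qualify as observables.

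First I would check that each $\widetilde{\A}_i:\Omega_i\to\elle{\hh^{\otimes k}}$ is an $\elle{\hh^{\otimes k}}$-valued observable. Positivity follows from the identity $\id^{\otimes(k-1)}\odot\A_i(x) = \Sigma_k(\id^{\otimes(k-1)}\otimes\A_i(x))$ together with the fact that $\Sigma_k$ is a completely positive (hence positive) map applied to the positive operator $\id^{\otimes(k-1)}\otimes\A_i(x)$. Normalization then comes from linearity of the symmetric product in each slot:
$$
\sum_{x\in\Omega_i}\widetilde{\A}_i(x) = \id^{\otimes(k-1)}\odot\sum_{x\in\Omega_i}\A_i(x) = \id^{\otimes(k-1)}\odot\id = \id^{\otimes k}\,.
$$

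Next I would handle the two implications. For the forward direction, assume $\A_1,\ldots,\A_n$ are $k$-compatible. Theorem \ref{prop:k-comp} provides an $\elle{\hh^{\otimes k}}$-valued observable $\widetilde{\G}$ on $\Omega_1\times\ldots\times\Omega_n$ whose marginals satisfy $\widetilde{\G}^{[i]}(x) = \id^{\otimes(k-1)}\odot\A_i(x) = \widetilde{\A}_i(x)$. By the standard definition \eqref{eq:joint-1}, this exhibits $\widetilde{\G}$ as a joint observable for $\widetilde{\A}_1,\ldots,\widetilde{\A}_n$ in $\elle{\hh^{\otimes k}}$, proving their compatibility. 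Conversely, if $\widetilde{\A}_1,\ldots,\widetilde{\A}_n$ are compatible, any joint observable $\widetilde{\G}$ of them in $\elle{\hh^{\otimes k}}$ has marginals $\widetilde{\G}^{[i]}(x)=\widetilde{\A}_i(x)=\id^{\otimes(k-1)}\odot\A_i(x)$, which is exactly condition \eqref{eq:joint-2}; invoking Theorem \ref{prop:k-comp} again yields $k$-compatibility of $\A_1,\ldots,\A_n$.

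There is essentially no obstacle here: the content is packaged entirely into Theorem \ref{prop:k-comp}, and the only subtle point is the verification that the $\widetilde{\A}_i$ are genuine observables, which is immediate from the properties of the symmetrizer channel established earlier in the section. The corollary is therefore best presented as a short two-line argument citing Theorem \ref{prop:k-comp} in both directions.
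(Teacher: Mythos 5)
Your proposal is correct and matches the paper's (implicit) argument: the corollary is stated without proof precisely because it is a direct reformulation of Theorem \ref{prop:k-comp}, with compatibility of the $\widetilde{\A}_i$ on $\hh^{\otimes k}$ meaning exactly the existence of a joint observable whose marginals equal $\id^{\otimes(k-1)}\odot\A_i(x)$. Your additional verification that each $\widetilde{\A}_i$ is a genuine observable (positivity via the symmetrizer channel, normalization via linearity of $\odot$) is a worthwhile detail the paper leaves tacit.
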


\begin{example}
Let us consider two $2$-outcome observables $\A_1$ and $\A_2$ defined by positive operators $A_1$ and $A_2$, respectively. That is, $\Omega_1 = \Omega_2 = \{+1,-1\}$, and $\A_i(+1) = A_i$ for $i=1,2$. These are always 2-compatible, and a possible choice for their $2$-copy joint observable is given by \eqref{eq:kcompJM}. By Theorem \ref{prop:k-comp}, one can also find a \(\elle{\hi^{\otimes 2}} \)-valued symmetric joint observable \(\widetilde{\G}\) on \(\Omega_1\times\Omega_2\).
Indeed, if \(A_i^c=\id - A_i\), \(\widetilde{\G}\) is defined by 
\begin{align*}
\widetilde{\G}(+1,+1) & = \frac{1}{2}\left(A_1\otimes A_2 + A_2\otimes A_1\right)
 \\
\widetilde{\G}(-1,+1) & = \frac{1}{2}\left(A_1^c\otimes A_2 + A_2\otimes A_1^c \right)\\
\widetilde{\G}(+1,-1) & = \frac{1}{2}\left(A_1\otimes A_2^c +A_2^c\otimes A_1 \right)
\\
\widetilde{\G}(-1,-1)& = \frac{1}{2}\left(A_1^c\otimes A_2^c +A_2^c\otimes A_1^c \right)
 \end{align*}
\end{example}

\section{Three qubit observables}
\label{sec:qubitcase}

In this section we concentrate on the case of three observables.
Up to permutation of observables, there are six different compatibility stacks, depicted in Figure \ref{fig:possible}.
We will now show that all compatibility stacks in Figure \ref{fig:possible} have a realization in terms of qubit observables.

\subsection{2-copy joint observable from mixing }

Let $\X$, $\Y$ and $\Z$ be the three sharp spin-$1/2$ observables on \(\CC^2\), with outcome spaces $\{+1,-1\}$.
We further denote $\X_a(\pm 1) = \half ( \id \pm a \sigma_x)$ for $0\leq a \leq 1$, and similarly $\Y_b(\pm 1) = \half ( \id \pm b \sigma_y)$ and $\Z_c(\pm 1) = \half ( \id \pm c \sigma_z)$ for $0\leq b,c \leq 1$.
These are considered as noisy versions of the sharp observables $\X$, $\Y$ and $\Z$, with noise intensities \(1-a,1-b,\) and \(1-c\), respectively.

We recall the following results on joint measurability of noisy spin-$1/2$ observables \cite{Busch86}.

\begin{theorem}
\label{prop:qubit_compatibility}
The following facts hold.
\begin{itemize}[leftmargin=*,labelsep=3mm]
\item[(1)] $\X_a$ and $\Y_b$ are compatible if and only if $a^2+b^2 \leq 1$.
\item[(2)] $\X_a$, $\Y_b$ and $\Z_c$ are compatible if and only if $a^2+b^2+c^2 \leq 1$.
\end{itemize}
\end{theorem}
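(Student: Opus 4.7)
Since part (1) is the special case of part (2) with $c = 0$ (obtained by summing a three-observable joint observable over the third label, or proved directly with the four-outcome analogue of the construction below), I would focus on proving part (2). The argument naturally splits into sufficiency and necessity.

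For sufficiency, I would simply exhibit the explicit $\elle{\CC^2}$-valued observable
\[
\G(s_1, s_2, s_3) = \tfrac{1}{8}\bigl(\id + s_1 a\sigma_x + s_2 b\sigma_y + s_3 c\sigma_z\bigr), \qquad s_i \in \{+1,-1\}.
\]
Normalization $\sum_s \G(s) = \id$ and each marginal relation $\sum_{s_j,s_k}\G(s) = \tfrac{1}{2}(\id + s_i a_i \sigma_i)$ follow immediately from $\sum_{s \in \{\pm 1\}} s = 0$. The effect $\G(s)$ is a qubit operator of trace $1/4$ whose Bloch vector is $(s_1 a, s_2 b, s_3 c)/4$, hence positive if and only if this vector has length at most $1/4$, equivalently $a^2 + b^2 + c^2 \leq 1$. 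Part (1) then follows by the obvious four-outcome analogue in which the $\sigma_z$ term is dropped.

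For necessity, I would parametrize any candidate joint observable as
\[
\G(s) = \alpha(s)\,\id + \vec{\beta}(s) \cdot \vec\sigma, \qquad s \in \{\pm 1\}^3,
\]
so that positivity becomes $\alpha(s) \geq |\vec{\beta}(s)|$ for each $s$, while the marginal constraints read $\sum_s s_i \vec{\beta}(s) = a_i \vec{e}_i$ (with $(a_1,a_2,a_3)=(a,b,c)$ and $\vec{e}_i$ the $i$-th standard basis vector) together with $\sum_s \alpha(s) = 1$. I would then symmetrize $\G$: conjugation by $\sigma_z$ combined with the relabelling $(s_1,s_2,s_3)\mapsto(-s_1,-s_2,s_3)$ preserves the marginals, and the analogous operations using $\sigma_x$, $\sigma_y$, and antiunitary complex conjugation (which flips only $\sigma_y$) together generate the full group of sign flips of the labels. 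Averaging $\G$ over this group yields a symmetric joint observable of the canonical form used above, whose positivity enforces $a^2 + b^2 + c^2 \leq 1$.

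The main obstacle lies in the necessity direction: one must verify carefully that the symmetrization preserves positivity and annihilates all ``higher-moment'' Pauli contributions such as $s_i s_j \sigma_k$, so that only the canonical form survives the averaging. A more hands-on alternative avoiding symmetrization is to evaluate the outcome probabilities $p(s) = \tr{\varrho \G(s)}$ on pure states with Bloch vector $\vec{n}$, use the identities $\sum_s s_i p(s) = a_i n_i$ together with $p(s) \geq 0$ and $\sum_s p(s) = 1$, and then extract $a^2 + b^2 + c^2 \leq 1$ by a judicious choice of $\vec{n}$ (for instance, $\vec{n} \propto (a,b,c)$) combined with a Cauchy--Schwarz-type bound on the signed probabilities.
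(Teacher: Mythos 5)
The paper does not prove this theorem at all: it is imported verbatim from Busch's 1986 paper via the citation \cite{Busch86}, so there is no in-paper argument to compare against. Your proposal therefore supplies something the paper omits, and the main line of it is correct. The sufficiency construction $\G(s)=\tfrac18(\id+s_1a\sigma_x+s_2b\sigma_y+s_3c\sigma_z)$ is the standard one (minor quibble: with $\tr{\G(s)}=1/4$ the positivity condition is $\abs{\vec\beta(s)}\leq\alpha(s)=1/8$ for the coefficient vector $\vec\beta(s)=(s_1a,s_2b,s_3c)/8$, but either normalization gives $a^2+b^2+c^2\leq 1$). For necessity, your symmetrization works: the four operations you list generate all eight sign flips of the Pauli axes, each preserves positivity and the marginals, and covariance of the average forces $\hat\alpha\equiv 1/8$ and $\hat\beta_i(s)=s_i t_i$, killing the higher-moment terms you worry about; the marginals then fix $t_i=a_i/8$ and positivity gives the bound. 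It is worth noting that this covariantization strategy is exactly the method the paper itself uses later (Proposition 5 and the proof of Theorem 3), so your argument is very much in its spirit. Two caveats: the odd-parity sign flip genuinely requires the antiunitary (complex conjugation) step, so you should check explicitly that conjugating a joint observable by an antiunitary and relabelling $s_2\mapsto -s_2$ again yields a joint observable with the same marginals (it does, since transposition preserves positivity and $\overline{\sigma_y}=-\sigma_y$). Second, your ``hands-on alternative'' is not convincing as sketched --- the Cauchy--Schwarz bound on the signed probabilities only gives $\sqrt{a^2+b^2+c^2}\leq 3$; a cleaner elementary route writes $\G(s)=\alpha(s)\id+\vec\beta(s)\cdot\vec\sigma$ and uses
\begin{equation*}
1=\sum_s\alpha(s)\geq\sum_s\abs{\vec\beta(s)}\geq\sum_s\bigl\langle \vec u_s,\vec\beta(s)\bigr\rangle=\sqrt{a^2+b^2+c^2}
\end{equation*}
with the unit vectors $\vec u_s=(s_1a,s_2b,s_3c)/\sqrt{a^2+b^2+c^2}$, avoiding symmetrization and antiunitaries altogether.
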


From these results we already find realizations of the cases (a)--(d) in Figure~\ref{fig:possible}. 
For instance, with the following choices of the parameters $a$, $b$ and $c$ we get suitable triplet of observables:
\begin{enumerate}[leftmargin=*,labelsep=3mm,label=(\alph*)]
\item $a=b=c=1/\sqrt{3}$;
\item $a=b=c=1/\sqrt{2}$;
\item $a=b=4/5$ and $c=3/5$;
\item $a=4/5$, $b=1$ and $c=3/5$.
\end{enumerate}

Let us then see how far we can get by mixing joint observables of two observables. 
The method is as follows. 
We choose randomly either $\X$, $\Y$ or $\Z$, measure the chosen observable, say $\X$, on the first system and then an optimal joint observable of the noisy versions of the remaining observables $\Y$ and $\Z$ on the second system. 
This gives the following sufficient condition for $2$-compatibility.

\begin{proposition}\label{prop:mixing}
(Sufficient condition for $2$-compatibility.)
$\X_a$, $\Y_b$ and $\Z_c$ are $2$-compatible if there are numbers $\lambda_1,\lambda_2,\lambda_3 \in [0,1]$ and $\alpha,\beta,\gamma \in [0,\pi/2]$ such that
\begin{equation*}
\lambda_1 + \lambda_2 + \lambda_3 = 1 
\end{equation*}
and
\begin{equation}\label{eq:random}
\begin{cases}
a \leq  \lambda_1 + \lambda_2 \cos\gamma + \lambda_3 \sin\alpha \\
b \leq \lambda_1 \sin\beta + \lambda_2 + \lambda_3 \cos\alpha \\
c \leq \lambda_1 \cos\beta + \lambda_2 \sin\gamma + \lambda_3  \\
\end{cases} \, .
\end{equation}
\end{proposition}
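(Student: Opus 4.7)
\medskip

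\textbf{Proof plan.} The plan is to implement literally the verbal recipe given just before the proposition. I would build three ``elementary'' $2$-copy joint observables, each obtained by measuring one sharp qubit observable on one copy and an optimal joint measurement of the two remaining noisy observables on the other copy, then take a convex combination weighted by $\lambda_1,\lambda_2,\lambda_3$, and finally use a standard outcome-smearing step to pass from the induced noise parameters to the given ones.

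First, I would apply Theorem~\ref{prop:qubit_compatibility}(1) three times: the identities $\sin^2\beta+\cos^2\beta = \sin^2\gamma+\cos^2\gamma = \sin^2\alpha+\cos^2\alpha = 1$ imply that the pairs $(\Y_{\sin\beta},\Z_{\cos\beta})$, $(\X_{\cos\gamma},\Z_{\sin\gamma})$ and $(\X_{\sin\alpha},\Y_{\cos\alpha})$ are jointly measurable. Let $\F_1,\F_2,\F_3$ denote corresponding joint observables on $\{+1,-1\}^2$. I would then define on $\{+1,-1\}^3$ the three $\elle{\CC^2\otimes\CC^2}$-valued observables
\begin{align*}
\G_1(x,y,z) &= \X(x)\otimes\F_1(y,z), \\
\G_2(x,y,z) &= \F_2(x,z)\otimes\Y(y), \\
\G_3(x,y,z) &= \F_3(x,y)\otimes\Z(z),
\end{align*}
and verify, using $\sum_x\X(x) = \id$ together with the marginal properties of the $\F_i$, that $\G_i$ is a $2$-copy joint observable of $(\X_{a_i},\Y_{b_i},\Z_{c_i})$ with $(a_1,b_1,c_1)=(1,\sin\beta,\cos\beta)$, $(a_2,b_2,c_2)=(\cos\gamma,1,\sin\gamma)$ and $(a_3,b_3,c_3)=(\sin\alpha,\cos\alpha,1)$.

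Next, I would set $\G = \lambda_1\G_1+\lambda_2\G_2+\lambda_3\G_3$ and check that $\G$ is a $2$-copy joint observable of $(\X_{a'},\Y_{b'},\Z_{c'})$, where $a',b',c'$ are precisely the right-hand sides of \eqref{eq:random}. This reduces to the linear identity $\sum_i\lambda_i\X_{a_i}(\pm 1) = \X_{\sum_i\lambda_i a_i}(\pm 1)$ (and its analogues for $\Y$ and $\Z$), which is immediate from $\X_a(\pm 1) = \tfrac{1}{2}(\id\pm a\sigma_x)$ and the condition $\lambda_1+\lambda_2+\lambda_3 = 1$.

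Finally, to pass from $(a',b',c')$ to the actual hypothesis parameters $a\leq a'$, $b\leq b'$, $c\leq c'$, I would apply a component-wise outcome-smearing: since $\X_a = \tfrac{a}{a'}\X_{a'} + (1-\tfrac{a}{a'})\tfrac{1}{2}\id$, flipping the $\X$-outcome of $\G$ with probability $\tfrac{1}{2}(1-a/a')$ (and analogously for $\Y$ and $\Z$) produces a new $\elle{\CC^2\otimes\CC^2}$-valued observable on $\{+1,-1\}^3$ whose marginals satisfy \eqref{eq:marginal} with $(\A_1,\A_2,\A_3)=(\X_a,\Y_b,\Z_c)$, since post-processing commutes with marginalization. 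I do not expect any real obstacle here beyond notational bookkeeping; the one point requiring mild care is the marginal computation in the first step, where the sharp observable must sit on the correct tensor factor so that the other two marginals pick up exactly the noisy joint measurement encoded by $\F_i$.
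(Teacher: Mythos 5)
Your construction is essentially identical to the paper's: the same convex combination $\lambda_1\,\X(x)\otimes\G^{2,3}(y,z)+\lambda_2\,\Y(y)\otimes\G^{1,3}(x,z)+\lambda_3\,\Z(z)\otimes\G^{1,2}(x,y)$ of sharp-observable-times-pairwise-joint-observable terms, with the pairwise joint measurements supplied by Theorem~\ref{prop:qubit_compatibility}(1) via the trigonometric parametrization of $a^2+b^2\le 1$. Your explicit outcome-flipping step to pass from the exact marginal values $a',b',c'$ to any $a\le a'$, $b\le b'$, $c\le c'$ is a correct filling-in of a detail the paper leaves implicit, not a different route.
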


\begin{proof}
We choose randomly either $\X$, $\Y$ or $\Z$, measure the chosen observable, say $\X$, on the first system and then an optimal joint observables $\G^{2,3}$ of the noisy versions the remaining observables $\Y$ and $\Z$ on the second system. 
The total procedure leads to an observable
\begin{align*}
\G(x,y,z) = \lambda_1 \X(x) \otimes \G^{2,3}(y,z) + \lambda_2 \Y(y) \otimes \G^{1,3}(x,z) + \lambda_3 \Z(z) \otimes \G^{1,2}(x,y) \, ,
\end{align*}
where $\lambda_1,\lambda_2, \lambda_3$ represent the probabilities for the choice of the first measurement $\X,\Y,\Z$, respectively.
The marginals of $\G^{1,2}$, $\G^{1,3}$ and $\G^{2,3}$ are noisy observables and their noise intensities are limited by Theorem \ref{prop:qubit_compatibility}.
Calculating the marginals of $\G$ gives $\X_a$, $\Y_b$ and $\Z_c$ with $a,b,c$ limited by \eqref{eq:random}.
\end{proof}

Using Proposition \ref{prop:mixing} we can cook up a realization of Figure~\ref{fig:possible}.(e).
The $2$-compatibility of $\X_a$, $\Y_b$ and $\Z_c$ can be achieved by setting
 $\alpha=\beta=\gamma=\pi/4$ and $\lambda_1=\lambda_2=\lambda_3=1/3$.
This choice provides a possible setting (in Figure~\ref{fig:possible})
\begin{enumerate}[leftmargin=*,labelsep=3mm,label=(\alph*),start=5]
\item $a=b=c=(1+\sqrt{2})/3$.
\end{enumerate}

\subsection{Optimal 2-copy joint observable}\label{app:proof_teo:2comp}

To find a realization of the compatibility stack depicted in Figure~\ref{fig:possible}.(f), we need to show that $\X_a$, $\Y_b$, $\Z_c$ are not $2$-compatible for some values of noise intenstities $a,b,c$. 
The fact that these kind of parameters exist follows from the next theorem.

\begin{theorem}\label{teo:2comp}
$\X_a$, $\Y_a$ and $\Z_a$ are 2-compatible if and only if $0\leq a\leq \sqrt{3}/2$.
\end{theorem}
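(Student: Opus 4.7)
The plan is to reduce the statement to an ordinary joint-measurability question on the spin-$1$ Hilbert space $\CC^3$, then handle the two directions by an explicit coherent-state construction and by a cubic symmetrization argument. By Corollary~\ref{cor:kcomp}, $\X_a,\Y_a,\Z_a$ are $2$-compatible iff the symmetrized observables $\widetilde{\A}_i(\pm)=\id\odot\A_i(\pm)$ are jointly measurable on $\CC^2\otimes\CC^2$. Each $\widetilde{\A}_i$ commutes with the flip unitary, so it is block-diagonal with respect to $\CC^2\otimes\CC^2=\Sym^2\CC^2\oplus\wedge^2\CC^2$, and on the one-dimensional antisymmetric block it collapses to $\tfrac{1}{2}\id$. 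A joint observable on $\CC^4$ can be compressed blockwise to a joint observable on each summand, so the problem reduces to joint measurability on the $3$-dimensional symmetric block. Identifying $\Sym^2\CC^2$ with the spin-$1$ irreducible representation and using $P_s(\sigma_i\otimes\id+\id\otimes\sigma_i)P_s=2J_i$, the compressed marginals become $Q_i(\pm)=\tfrac{1}{2}(\id\pm aJ_i)$, where $J_1,J_2,J_3$ are the spin-$1$ angular momenta with spectra $\{-1,0,+1\}$.

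For sufficiency at the extremal value $a=\sqrt{3}/2$, I would construct the joint observable from the eight spin-$1$ coherent states pointing to the cube-vertex directions $\vec n_{xyz}=(x,y,z)/\sqrt{3}$, setting $\G(x,y,z):=\tfrac{3}{8}|\vec n_{xyz}\rangle\langle\vec n_{xyz}|$. Using the spin-$1$ identity $|\vec n\rangle\langle\vec n|=\tfrac{1}{2}(J_{\vec n}+J_{\vec n}^2)$ together with the $2$-design property of the cube vertices (which gives $\sum_{xyz}J_{\vec n_{xyz}}^2=\tfrac{16}{3}\id$), a direct calculation yields $\sum_{xyz}\G(x,y,z)=\id$ and $\sum_{y,z}\G(x,y,z)=\tfrac{1}{2}\id+\tfrac{\sqrt{3}}{4}xJ_1=Q_1(x)$ at $a=\sqrt{3}/2$. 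For $a<\sqrt{3}/2$ I would mix $\G$ with the trivial uniform POVM $\G_0(x,y,z)=\tfrac{1}{8}\id$ using weight $\lambda=2a/\sqrt{3}\in[0,1]$, which linearly rescales the marginal noise intensity to $a$.

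For necessity, suppose $Q_1,Q_2,Q_3$ admit a joint observable $\G$ on $\CC^3$. I would exploit the cubic symmetry of the problem: the octahedral group $O\subset SO(3)$ of order $24$ acts on $(J_1,J_2,J_3)$ by signed permutations and is represented unitarily on the spin-$1$ space, with a corresponding signed permutation action on the outcome set $\{\pm 1\}^3$. Averaging $\G$ simultaneously over both actions produces an $O$-covariant joint observable with the same marginals. Invariance under the $C_3$-stabilizer of $\vec 1=(1,1,1)$ restricts $\G(+,+,+)$ to the span of $\{\id,J_{\vec 1},J_{\vec 1}^2\}$, so by covariance
\[
\G(x,y,z)=\alpha\,\id+\beta\,J\!\cdot\!\vec m+\gamma\,(J\!\cdot\!\vec m)^2,\qquad \vec m=(x,y,z).
\]
The marginal condition $\sum_{y,z}\G(x,y,z)=Q_1(x)$ pins $\beta=a/8$ and $\alpha+2\gamma=1/8$, leaving $\gamma$ as the only free parameter. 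Since $J\!\cdot\!\vec m$ has spectrum $\{-\sqrt{3},0,+\sqrt{3}\}$, positivity of $\G(x,y,z)$ reduces to the three scalar inequalities $\tfrac{1}{8}-2\gamma\ge 0$ and $\tfrac{1}{8}+\gamma\pm\tfrac{a\sqrt{3}}{8}\ge 0$, which are jointly solvable in $\gamma$ if and only if $a\sqrt{3}\le 3/2$, i.e.\ $a\le\sqrt{3}/2$.

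The main obstacle I expect is the symmetrization step in the necessity direction: one must verify carefully that every $R\in O$ is implemented by a unitary $U_R$ on $\CC^3$ satisfying $U_R J_i U_R^\dagger=\epsilon_{R,i}J_{\sigma_R(i)}$ for a signed permutation matching $R$, and that the combined Hilbert-space and outcome average preserves positivity, normalization, and the marginal conditions. Once covariance is established, the dimension count of $C_3$-invariants on the spin-$1$ irrep and the resulting three-line eigenvalue computation are straightforward, as is the spin-$1$ coherent-state identity needed for sufficiency.
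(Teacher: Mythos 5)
Your proof is correct, and it follows the same overall strategy as the paper --- reduce to symmetric joint observables via Theorem~\ref{prop:k-comp}/Corollary~\ref{cor:kcomp}, then average over the octahedral group and classify the resulting covariant family --- but it inserts an extra reduction that genuinely changes the computation. The paper stays on $\CC^2\otimes\CC^2$, restricts the two-valued $SU(2)$ representation of $O$ to get $U(g)=\tilde g\otimes\tilde g$, computes the six-dimensional commutant of the stabilizer, discards the antisymmetric part, and re-expresses everything through the projections $P_\pm,Q_\pm$. You instead exploit that each $\widetilde{\A}_i$ is block-diagonal for $\CC^2\otimes\CC^2\cong\Sym^2\CC^2\oplus\wedge^2\CC^2$ and acts as $\tfrac12\id$ on the singlet, so the whole problem lives on the spin-$1$ irrep; there the stabilizer of $(1,1,1)$ has simple spectrum and its commutant is just ${\rm span}\{\id,\,J\cdot\vec m,\,(J\cdot\vec m)^2\}$, which makes the positivity analysis a three-line eigenvalue computation and reproduces the paper's bound $a\le\sqrt3/2$ (your one free parameter $\gamma$ plays the role of the paper's $\alpha+\beta$). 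Your explicit coherent-state POVM $\G(\vec u)=\tfrac38\,\kb{\vec n_{\vec u}}{\vec n_{\vec u}}$ for the extremal case is a nice concrete witness that the paper only produces implicitly as the $\beta=0$, $\alpha=3/8$ member of its covariant family. Two small points to tighten: (i) the block reduction is an equivalence only after you note that the singlet marginals $\tfrac12\id$ are trivially compatible, so a joint observable on $\Sym^2\CC^2$ can be lifted back to $\CC^4$ by direct sum with the uniform POVM $\tfrac18\id$ on $\wedge^2\CC^2$ --- compression alone only gives the necessity direction; (ii) the covariant averaging step you flag as the main obstacle is exactly the paper's Proposition~\ref{prop:covariantization} and goes through verbatim on the spin-$1$ space, since spin $1$ is an honest (single-valued) representation of $SO(3)$ and the set $\{\pm aJ_1,\pm aJ_2,\pm aJ_3\}$ is permuted by conjugation with the $U_R$.
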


By Theorem \ref{prop:k-comp}, the observables $\X_a$, $\Y_a$ and $\Z_a$ are $2$-compatible if and only if the observables $\widetilde{\X}_a$, $\widetilde{\Y}_a$ and $\widetilde{\Z}_a$ have a symmetric joint observable $\widetilde{\G}$, where
\begin{align}
\widetilde{\X}_a(\pm 1) & = \id\odot\X_a(\pm 1) = \half (\id \otimes \X_a(\pm 1) + \X_a(\pm 1) \otimes \id ) \notag\\
&= \frac{1}{4} ( 2 \, \id \otimes \id \pm a (\id \otimes \sigma_x + \sigma_x\otimes \id ) ) \label{eq:tildeX}
\end{align}
and similarly for $\widetilde{\Y}_a$ and $\widetilde{\Z}_a$. We will now show that $\widetilde{\G}$ can be chosen to be covariant with respect to the transitive action of a suitable group on the joint outcome space $\Omega = \{+1,-1\}^3$ of the three observables $\widetilde{\X}_a$, $\widetilde{\Y}_a$ and $\widetilde{\Z}_a$. Covariance will then drastically decrease the freedom in the choice of $\widetilde{\G}$, actually reducing it to only fixing two parameters.
To exploit covariance, we start from the following simple fact.

\begin{proposition}\label{prop:covariantization}
Suppose $G$ is a finite group, $\Omega$ is a $G$-space and $U$ is a unitary representation of $G$ in the Hilbert space $\kk$. Let $\ff$ be a collection of subsets of $\Omega$ such that
$$
g.X = \{g.x\mid x\in X\}\in\ff \quad \text{for all $g\in G$ and $X\in\ff$} \,.
$$
Then, for any observable $\G:\Omega\to\lk$ satisfying the relation
$$
\G(g.X) = U(g)\G(X)U(g)^* \qquad \forall X\in\ff\,,\,g\in G \,,
$$
the observable $\widehat{\G}:\Omega\to\lk$ given by
\begin{equation}\label{eq:covariantization}
\widehat{\G}(x) = \frac{1}{\# G} \sum_{g\in G} U(g)^*\G(g.x)U(g) \qquad \forall x\in\Omega
\end{equation}
is such that
\begin{enumerate}[leftmargin=*,labelsep=3mm,label=(\roman*)]
\item $\widehat{\G}(g.x) = U(g)\widehat{\G}(x)U(g)^*$ for all $x\in\Omega$ and $g\in G$;
\item $\widehat{\G}(X) = \G(X)$ for all $X\in\ff$.
\end{enumerate}
\end{proposition}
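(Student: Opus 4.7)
The plan is to verify the three things needed: that $\widehat{\G}$ is actually an observable, that it satisfies covariance~(i) on all of $\Omega$, and that it agrees with $\G$ on the distinguished collection~$\ff$ as in~(ii). All three reduce to standard group-averaging manipulations using the fact that for each $g\in G$ the map $x\mapsto g.x$ is a bijection of~$\Omega$.

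First I would check that $\widehat{\G}$ is a well-defined $\lk$-valued observable. Positivity is immediate because each summand $U(g)^*\G(g.x)U(g)$ is positive. For normalization, I would compute
\begin{equation*}
\sum_{x\in\Omega}\widehat{\G}(x) = \frac{1}{\# G}\sum_{g\in G} U(g)^*\Bigl(\sum_{x\in\Omega}\G(g.x)\Bigr)U(g) = \frac{1}{\# G}\sum_{g\in G} U(g)^*\id\, U(g) = \id,
\end{equation*}
where in the inner sum I have used that $x\mapsto g.x$ permutes~$\Omega$ so that $\sum_{x\in\Omega}\G(g.x)=\sum_{y\in\Omega}\G(y)=\id$.

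For~(i) I would fix $h\in G$ and $x\in\Omega$, then plug the definition of $\widehat{\G}(h.x)$ and reparametrize the averaging variable. Explicitly,
\begin{equation*}
\widehat{\G}(h.x) = \frac{1}{\# G}\sum_{g\in G}U(g)^*\G((gh).x)U(g) = U(h)\Bigl(\frac{1}{\# G}\sum_{g'\in G}U(g')^*\G(g'.x)U(g')\Bigr)U(h)^* ,
\end{equation*}
after substituting $g' = gh$ and using that $U$ is a representation so $U(g)^* = U(h)U(g')^*U(h)^*$ up to the unitary factors cancelling against $U(g) = U(g')U(h)^{-1}$; the bracketed term is exactly $\widehat{\G}(x)$.

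For~(ii), the hypothesis $g.X\in\ff$ for all $g$ and $X\in\ff$ together with the assumed covariance on $\ff$ is the key ingredient. Summing \eqref{eq:covariantization} over $x\in X\in\ff$ and using the bijection $x\mapsto g.x$ on $X$,
\begin{equation*}
\widehat{\G}(X) = \frac{1}{\# G}\sum_{g\in G}U(g)^*\G(g.X)U(g) = \frac{1}{\# G}\sum_{g\in G}U(g)^*\bigl(U(g)\G(X)U(g)^*\bigr)U(g) = \G(X).
\end{equation*}
None of the steps is subtle; the main thing to be careful about is simply verifying that each application of covariance is legitimate, i.e., that the subset to which we apply it really lies in~$\ff$, which is guaranteed by the stability assumption $g.X\in\ff$. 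I expect no real obstacle, and the whole proof should fit in a handful of lines.
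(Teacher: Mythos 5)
Your proof is correct and is exactly the ``direct verification'' that the paper leaves to the reader: positivity and normalization via the bijection $x\mapsto g.x$, covariance via the substitution $g'=gh$, and agreement on $\ff$ via the stability $g.X\in\ff$ together with the assumed covariance on $\ff$. No discrepancy with the paper's (omitted) argument.
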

\begin{proof}
Direct verification.
\end{proof}

According to \eqref{eq:covariantization}, we call the observable $\widehat{\G}$ the {\em $U$-covariant version} of $\G$.

The choice of the covariance group $G$ and its action on the outcomes $\Omega$ for a joint observable of $\widetilde{\X}_a$, $\widetilde{\Y}_a$ and $\widetilde{\Z}_a$ is prescribed by the covariance properties of $\X$, $\Y$ and $\Z$. Namely, the set of effects $\{\X(\pm 1),\Y(\pm 1),\Z(\pm 1)\}$ is invariant for the rotations in the octahedron subgroup $O\subset SO(3)$. Moreover, $O$ acts transitively on this set. We therefore expect that the proper covariance group for our problem is $G=O$. We now better explain this statement.

The {\em octahedron group} $O$ is the order $24$ group of the $90^\circ$ rotations around the three coordinate axes $\vec{i},\vec{j},\vec{k}$, together with the $120^\circ$ rotations around the axes $(\pm\vec{i}\pm\vec{j}\pm\vec{k})/\sqrt{3}$ and the $180^\circ$ rotations around $(\pm\vec{i}\pm\vec{j})/\sqrt{2}$, $(\pm\vec{j}\pm\vec{k})/\sqrt{2}$ and $(\pm\vec{i}\pm\vec{k})/\sqrt{2}$. It preserves the set $\Omega = \{(x,y,z)\in\RR^3 \mid x,y,z\in\{+1,-1\}\}$ and acts transitively on it. Moreover, the stabilizer subgroup of any $\vec{u}\in\Omega$ is just the subgroup $O_{\vec{u}}$ of the three $120^\circ$ rotations around $\vec{u}/\sqrt{3}$.

The octahedron group also acts on the spin-$1/2$ Hilbert space $\hh=\CC^2$ by restriction of the usual two-valued $SU(2)$-representation of $SO(3)$. This gives an ordinary representation $U(g) = \tilde{g}\otimes\tilde{g}$ of $O$ on the $2$-copy Hilbert space $\hh^{\otimes 2} = \CC^2\otimes\CC^2$, where $\tilde{g}$ is any of the two elements of $SU(2)$ lying above $g\in O$.

Finally, let $\pi_i : \Omega\to\{+1,-1\}$ be the projection onto the $i$-th component ($i=1,2,3$), and define the collection of subsets
\begin{align*}
\ff = & \{\pi_1^{-1}(x) \,,\, \pi_2^{-1}(y) \,,\, \pi_3^{-1}(z) \mid x,y,z\in\{+1,-1\}\} \,.
\end{align*}
Clearly, the collection $\ff$ is $O$-invariant. Moreover, if $\widetilde{\G}:\Omega\to\lh$ is any symmetric joint observable of $\widetilde{\X}_a$, $\widetilde{\Y}_a$ and $\widetilde{\Z}_a$, then
$$
\widetilde{\G}(\pi_1^{-1}(x)) = \widetilde{\X}_a(x) \qquad \widetilde{\G}(\pi_2^{-1}(y)) = \widetilde{\Y}_a(y) \qquad \widetilde{\G}(\pi_3^{-1}(z)) = \widetilde{\Z}_a(z) \,.
$$
The covariance properties of the observables $\X$, $\Y$ and $\Z$ then imply that $\widetilde{\G}(g.X) = U(g)\widetilde{\G}(X)U(g)^*$ for all $X\in\ff$ and $g\in O$. Hence, by Proposition \ref{prop:covariantization} the $U$-covariant version $(\widetilde{\G})^\wedge$ of $\widetilde{\G}$ defined in \eqref{eq:covariantization} yields the same margins $\widetilde{\X}_a$, $\widetilde{\Y}_a$ and $\widetilde{\Z}_a$. Since the representations $U$ of $O$ and $\sigma$ of $S_2$ commute, the joint observable $(\widetilde{\G})^\wedge$ is both $U$-covariant and symmetric.

In summary, in order to find the maximal value of $a$ for which the observables $\X_a$, $\Y_a$ and $\Z_a$ are $2$-compatible, we are led to classify the family of symmetric $U$-covariant observables on $\Omega$. This is done in the next proposition.

\begin{proposition}\label{prop:Ucovar}
A map $\G:\Omega\to\elle{\hh^{\otimes 2}}$ is a symmetric and $U$-covariant observable if and only if there exist real numbers $\alpha$ and $\beta$ with $\alpha\geq 0$, $\beta\geq 0$ and $\alpha+\beta\leq 3/8$ such that
\begin{align}\label{eq:covar}
\begin{aligned}
\G(\vec{u}) = & \frac{4(\alpha+\beta)-1}{16} \left[\vec{u}\cdot\vec{\sigma}\otimes\vec{u}\cdot\vec{\sigma} - (\sigma_x\otimes\sigma_x + \sigma_y\otimes\sigma_y + \sigma_z\otimes\sigma_z)\right] \\
& + \frac{\alpha-\beta}{4\sqrt{3}} \left(\vec{u}\cdot\vec{\sigma}\otimes\id + \id\otimes\vec{u}\cdot\vec{\sigma}\right) + \frac{1}{8} \id\otimes\id
\end{aligned}
\end{align}
for all $\vec{u}\in\Omega$.
\end{proposition}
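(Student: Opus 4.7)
The plan is to exploit the transitive action of $O$ on $\Omega$: a $U$-covariant observable $\G$ is fully determined by its single value $\G(\vec{u}_0)$ at any fixed $\vec{u}_0\in\Omega$ (I will take $\vec{u}_0=(1,1,1)$), and that value is constrained only by (i) commutation with the image under $U$ of the stabilizer $O_{\vec{u}_0}\cong C_3$, (ii) membership in $\Sym(2,\lh)$, i.e.\ commutation with the swap $F$, and (iii) positivity together with the normalization $\sum_{\vec{u}\in\Omega}\G(\vec{u})=\id\otimes\id$. Once these are pinned down, covariance propagates the result from $\vec{u}_0$ to all of $\Omega$.

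The reduction to a few scalars is purely representation-theoretic. Under $SU(2)$ the space $\hh^{\otimes 2}$ splits as spin-$0$ singlet plus spin-$1$ triplet; under the further $C_3$ subgroup generated by the $120^\circ$ rotation about $\vec{u}_0/\sqrt{3}$, the triplet splits into the three one-dimensional eigenspaces of $V=\vec{u}_0\cdot\vec{\sigma}/\sqrt{3}$ corresponding to $m=0,\pm 1$, carrying the $C_3$-characters $1,\omega,\omega^{-1}$ (with $\omega=e^{2\pi i/3}$). Combined with the swap eigenvalues $+1$ on the triplet and $-1$ on the singlet, this yields four mutually nonisomorphic, one-dimensional joint invariant subspaces for $C_3\times\langle F\rangle$, on each of which any self-adjoint commuting operator acts as a real scalar. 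Thus $\G(\vec{u}_0)$ is specified by four real eigenvalues $\lambda_s,\lambda_0,\lambda_+,\lambda_-$.

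Normalization then imposes two linear constraints. The singlet is $O$-invariant, so its $\G(\vec{u})$-eigenvalue is the same for every $\vec{u}$, and summing over $\Omega$ forces $8\lambda_s=1$, hence $\lambda_s=1/8$. The triplet is an $O$-irreducible module, so by Schur's lemma each averaged spectral projection $\sum_{\vec{u}\in\Omega}P_m(\vec{u})$ of $V$ is a scalar multiple of the triplet identity, equal to $8/3$ by trace; this yields $\lambda_0+\lambda_++\lambda_-=3/8$. Exactly two parameters remain free, which I will identify as $\alpha=\lambda_+$ and $\beta=\lambda_-$, so that $\lambda_0=3/8-\alpha-\beta$.

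The formula \eqref{eq:covar} will then be recovered by expanding $\G(\vec{u}_0)$ in the four $C_3\times\langle F\rangle$-invariant building blocks $(\vec{u}\cdot\vec{\sigma})\otimes(\vec{u}\cdot\vec{\sigma})$, $\sigma_x\otimes\sigma_x+\sigma_y\otimes\sigma_y+\sigma_z\otimes\sigma_z$, $\vec{u}\cdot\vec{\sigma}\otimes\id+\id\otimes\vec{u}\cdot\vec{\sigma}$ and $\id\otimes\id$, computing each of their eigenvalues on the four invariant subspaces, and solving the resulting $4\times 4$ linear system for the coefficients in front of $\alpha,\beta$. Positivity of $\G(\vec{u}_0)$ then reduces to nonnegativity of the four eigenvalues: $\lambda_s=1/8$ is automatic, while $\lambda_+,\lambda_-\geq 0$ and $\lambda_0\geq 0$ translate precisely to $\alpha,\beta\geq 0$ and $\alpha+\beta\leq 3/8$. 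The main tedium I expect is the eigenvalue-to-coefficient matching, which is routine but arithmetic-heavy; the conceptual content lies in the representation-theoretic decomposition of $\hh^{\otimes 2}$ under $C_3\times\langle F\rangle$.
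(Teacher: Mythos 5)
Your proposal is correct and follows essentially the same route as the paper: reduce to the single value $\G(\vec{u}_0)$, identify the symmetric part of the stabilizer commutant as the span of four rank-one orthogonal projections (your four $C_3\times\langle F\rangle$-isotypic lines are exactly the paper's $P_+$, $P_-$, $Q_+$, $Q_-$), fix the remaining freedom via normalization, and read off the positivity constraints. The only difference is presentational --- you get the four projections and the constraints $\lambda_s=1/8$ and $\lambda_0+\lambda_++\lambda_-=3/8$ from character theory and Schur's lemma, whereas the paper writes down an explicit basis $M_0,\dots,M_5$ of the commutant and uses trace identities --- and both defer the same routine coefficient matching to produce the final formula.
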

\begin{proof}
We will proceed in several steps.

(I) Since the action of $O$ on $\Omega$ is transitive, a $U$-covariant observable $\G$ is completely determined by its value at $\vec{u}_0 = (+1,+1,+1)$ by the relation
\begin{align}
\G(g.\vec{u}_0) = U(g) \G(\vec{u}_0) U(g)^* \qquad \forall g\in O \,. \label{eq:covar2}
\end{align}
This equation implies that $\G(\vec{u}_0)$ must commute with the representation $U$ restricted to the stabilizer $O_{\vec{u}_0}$ of $\vec{u}_0$. This happens if and only if $\G(\vec{u}_0)$ is in the commutant $(\e^{i(\pi/3)\vec{n}\cdot\vec{\sigma}}\otimes\e^{i(\pi/3)\vec{n}\cdot\vec{\sigma}})'$ of the operator $\e^{i(\pi/3)\vec{n}\cdot\vec{\sigma}}\otimes\e^{i(\pi/3)\vec{n}\cdot\vec{\sigma}}$, where $\vec{n} = \vec{u}_0/\sqrt{3} = (\vec{i}+\vec{j}+\vec{k})/\sqrt{3}$. The eigenvalues of $\e^{i(\pi/3)\vec{n}\cdot\vec{\sigma}}\otimes\e^{i(\pi/3)\vec{n}\cdot\vec{\sigma}}$ are $\e^{i(2\pi/3)}$ and $\e^{-i(2\pi/3)}$ with multiplicity one, and $1$ with multiplicity two. Hence, $\dim (\e^{i(\pi/3)\vec{n}\cdot\vec{\sigma}}\otimes\e^{i(\pi/3)\vec{n}\cdot\vec{\sigma}})' = 6$. A linear basis of $(\e^{i(\pi/3)\vec{n}\cdot\vec{\sigma}}\otimes\e^{i(\pi/3)\vec{n}\cdot\vec{\sigma}})'$ is made up of the selfadjoint opertors
\begin{gather*}
M_0 = \id\otimes\id \qquad M_1 = \vec{n}\cdot\vec{\sigma}\otimes\vec{n}\cdot\vec{\sigma} \\
M_2 = \frac{1}{3}(\sigma_x\otimes\sigma_x + \sigma_y\otimes\sigma_y + \sigma_z\otimes\sigma_z) \\
M_3 = \vec{n}\cdot\vec{\sigma}\otimes\id + \id\otimes\vec{n}\cdot\vec{\sigma} \qquad
M_4 = \vec{n}\cdot\vec{\sigma}\otimes\id - \id\otimes\vec{n}\cdot\vec{\sigma} \\
M_5 = \sigma_x\otimes\sigma_y - \sigma_y\otimes\sigma_x + \sigma_y\otimes\sigma_z - \sigma_z\otimes\sigma_y + \sigma_z\otimes\sigma_x - \sigma_x\otimes\sigma_z \,.
\end{gather*}
Among them, $M_0$, $M_1$, $M_2$ and $M_3$ are symmetric, and $M_4$ and $M_5$ are antisymmetric. Thus, $\G$ is symmetric only if $\G(\vec{u}_0)$ is a real linear combination of $M_0$, $M_1$, $M_2$, $M_3$. This is also a sufficient condition for the symmetry of $\G$ by \eqref{eq:covar2} and the symmetry of the $U(g)$'s.

(II) It is easy to check that the operators $M_0$, $M_1$, $M_2$ and $M_3$ all commute among themselves. Moreover,
\begin{gather*}
M_0^2 = M_1^2 = M_0\,, \qquad M_2^2 = \frac{1}{3} (M_0 - 2 M_2)\,, \qquad M_3^2 = 2(M_0+M_1)\,,\\
M_1 M_2 = \frac{1}{3}(M_0+M_1-3M_2)\,, \qquad M_1 M_3 = M_3\,, \qquad M_2 M_3 = \frac{1}{3} \, M_3 \,,\\
M_0 M_i = M_i \quad \forall i\,.
\end{gather*}
Thus, the four selfadjoint operators
\begin{align*}
P_+ & = \frac{1}{4} (M_0 + M_1 + M_3) & \qquad
P_- & = \frac{1}{4} (M_0 + M_1 - M_3) \\
Q_+ & = \frac{1}{4} (M_0 - 3M_2)  & \qquad
Q_- & = \frac{1}{4} (M_0 - 2M_1 + 3M_2)
\end{align*}
are mutually commuting orthogonal projections summing up to the identity of $\hh^{\otimes 2}$. It follows that $P_+$, $P_-$, $Q_+$, $Q_-$ are rank-$1$ mutually orthogonal projections. Since they span the same linear space as $\{M_0,M_1,M_2,M_3\}$, we can rewrite
\begin{equation}\label{eq:M111}
\G(\vec{u}_0) = \alpha P_+ + \beta P_- + \gamma Q_+ + \delta Q_-
\end{equation}
where
\begin{equation}\label{eq:pos1}
\alpha\geq 0\,, \qquad \beta\geq 0\,, \qquad \gamma\geq 0\,, \qquad \delta\geq 0
\end{equation}
by the positivity condition $\G(\vec{u}_0)\geq 0$.

(III) By taking the trace of the normalization condition
\begin{equation}\label{eq:norm}
\sum_{gO_{\vec{u}_0}\in O/O_{\vec{u}_0}} U(g)\G(\vec{u}_0)U(g)^* = \id\otimes\id
\end{equation}
and observing that $\# O/O_{\vec{u}_0} = \#\Omega = 8$, we obtain
\begin{equation}\label{eq:cond1}
\tr{\G(\vec{u}_0)} = \frac{1}{2} \,.
\end{equation}
Moreover, the operators $M_0$ and $M_2$ commute with the representation $U$, hence so does the rank-$1$ projection $Q_+$. Multiplying both the sides of \eqref{eq:norm} by $Q_+$ and taking again the trace, we then get
\begin{equation}\label{eq:cond2}
\tr{Q_+\G(\vec{u}_0)} = \frac{1}{8} \,.
\end{equation}
Inserting \eqref{eq:M111} into \eqref{eq:cond1} and \eqref{eq:cond2} yields the conditions
$$
\gamma = \frac{1}{8} \qquad \qquad \delta = \frac{3}{8} - \alpha - \beta \,.
$$
The positivity requirement \eqref{eq:pos1} thus translates into
\begin{equation*}
\alpha\geq 0\,, \qquad \beta\geq 0\,, \qquad \alpha+\beta\leq\frac{3}{8} \,,
\end{equation*}
and \eqref{eq:M111} is rewritten as
\begin{eqnarray*}
\G(\vec{u}_0) & = & \alpha (P_+ - Q_-) + \beta (P_- - Q_-) + \frac{1}{8}\,Q_+ + \frac{3}{8}\,Q_- \\
& = & \frac{3}{4}\left(\alpha+\beta-\frac{1}{4}\right) (M_1-M_2) + \frac{1}{4}\left(\alpha-\beta\right) M_3 + \frac{1}{8} M_0 \\
& = & \frac{4(\alpha+\beta)-1}{16} \left[\vec{u}_0\cdot\vec{\sigma}\otimes\vec{u}_0\cdot\vec{\sigma} - (\sigma_x\otimes\sigma_x + \sigma_y\otimes\sigma_y + \sigma_z\otimes\sigma_z)\right] \\
&& + \frac{\alpha-\beta}{4\sqrt{3}} \left(\vec{u}_0\cdot\vec{\sigma}\otimes\id + \id\otimes\vec{u}_0\cdot\vec{\sigma}\right) + \frac{1}{8} \id\otimes\id \,.
\end{eqnarray*}
We have already seen that $3M_2 = \sigma_x\otimes\sigma_x + \sigma_y\otimes\sigma_y + \sigma_z\otimes\sigma_z$ commutes with $U(g)$ for all $g\in O$. Therefore, formula \eqref{eq:covar} follows from the previous equation by the relation \eqref{eq:covar2}.
\end{proof}

\begin{remark}\label{rem:covar}
The choice of the covariance group $G = O$ and its natural action on the joint outcome space $\Omega$ is the minimal possible in order to construct a transitive action of $G$ on $\Omega$ preserving the set of effects $\{\widetilde{\X}_a(\pm 1), \widetilde{\Y}_a(\pm 1), \widetilde{\Z}_a(\pm 1)\}$. Transitivity is needed in order to label all the covariant joint observables by means of the single operator $\G(\vec{u}_0)$ as in \eqref{eq:covar2}, and thus reduce the many free parameters of the problem to the only choice of such an operator.
\end{remark}

Now, we need to take the three margins of the most general $U$-covariant observable found in Proposition \ref{prop:Ucovar} and compare it with the observables $\widetilde{\X}_a$, $\widetilde{\Y}_a$ and $\widetilde{\Z}_a$. By the covariance property, it is sufficient to consider only the first margin $\G^{[1]}$. We have
\begin{align*}
& \sum_{y,z\in\{+1,-1\}} (x\sigma_x+y\sigma_y+z\sigma_z)\otimes(x\sigma_x+y\sigma_y+z\sigma_z) = 4(\sigma_x\otimes\sigma_x + \sigma_y\otimes\sigma_y + \sigma_z\otimes\sigma_z) \\
& \sum_{y,z\in\{+1,-1\}} [(x\sigma_x+y\sigma_y+z\sigma_z)\otimes\id + \id\otimes(x\sigma_x+y\sigma_y+z\sigma_z)] = 4x(\sigma_x\otimes\id+\id\otimes\sigma_x)
\end{align*}
and hence
$$
\G^{[1]}(x) = \sum_{y,z\in\{+1,-1\}} \G(x,y,z) = \frac{(\alpha-\beta)x}{\sqrt{3}} (\sigma_x\otimes\id+\id\otimes\sigma_x) + \frac{1}{2}\,\id\otimes\id \,.
$$
Comparing this formula with \eqref{eq:tildeX} yields
$$
\alpha-\beta = \frac{\sqrt{3}}{4} \, a \,.
$$
By the positivity conditions $\alpha\geq 0$, $\beta\geq 0$ and $\alpha+\beta\leq 3/8$, we thus see that the maximal value of $a$ is $a=\sqrt{3}/2$. This completes the proof of Theorem \ref{teo:2comp}.

As a result, the case of Figure~\ref{fig:possible}.(f) can now be achieved for example by setting
\begin{enumerate}[leftmargin=*,labelsep=3mm,label=(\alph*),start=6]
\item $a=b=c=1$,
\end{enumerate}
though any choice larger than $\sqrt{3}/2$ suffices.

By combining the result of Theorem \ref{teo:2comp} with that of Theorem \ref{prop:qubit_compatibility} in the case $a=b=c$, we get a complete characterization of the index of incompatibility for three equally noisy orthogonal qubit observables $\X_a$, $\Y_a$, and $\Z_a$. In Figure~\ref{fig:3_qubit_index}, the index of incompatibity $\ind{\{\X_a,\Y_a,\Z_a\}}$ is plotted as a function of the noise parameter $a$.

\begin{figure}
\centering
 \includegraphics[width=8cm]{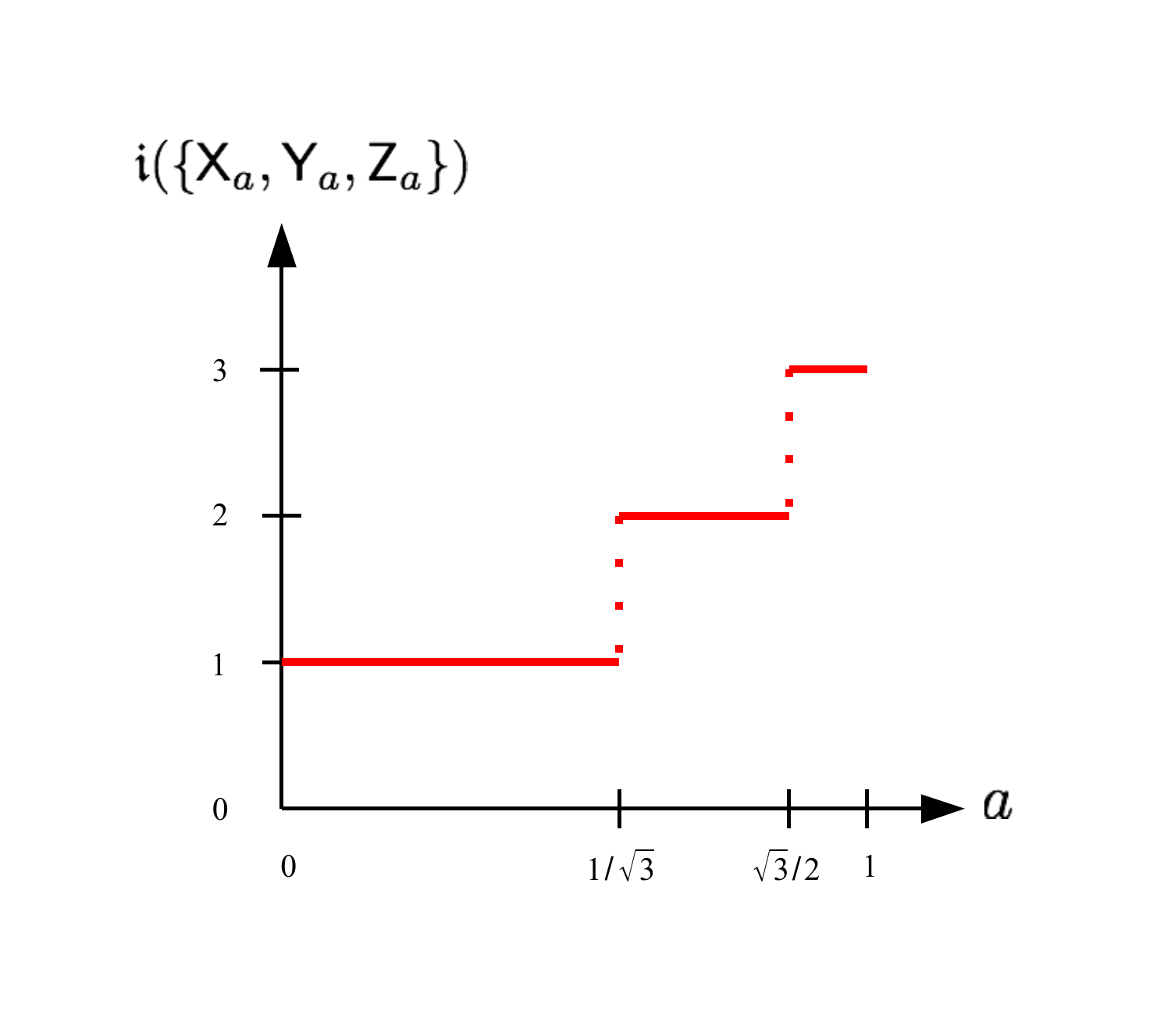}
 \caption{\label{fig:3_qubit_index}The index of incompatibility $\ind{\{\X_a,\Y_a,\Z_a\}}$ as a function of the noise parameter $a$ for three noisy orthogonal qubit observables.}
\end{figure}

\section{Concluding remarks}\label{sec:conclusion}


The incompatibility of quantum observables can be evaluated and measured in various ways. In this paper we introduce a measure of incompatibility based on the number of system copies needed to be able to measure the given observables simultaneously. We call this number the index of  incompatibility.  It quantifies the incompatibility of a set of observables as a whole, but leaves out the finer details regarding the various compatibility relations between the observables.

In \cite{KuHeFr14} it was shown that every conceivable joint measurability combination of a set of observables is realizable. Such combinations are representable by joint measurability hypergraphs where the vertices are observables and edges mark the compatibility relations. 
By translating this approach to our multi-copy setting, we have  analogously defined the notion of compatibility stack that represents the potential multi-copy compatibility relations present in a set of observables. Namely, whereas in \cite{KuHeFr14} the hypegraph was binary --- the presence of graph edges represented compatibility between the observables of corresponding subset --- here we have such a graph for each possible compatibility index. 
We demonstrate that every compatibility stack with three vertices has a realization in terms of quantum observables. However, it remains an open question if all compatibility stacks have such a realization.

\vspace{6pt} 


\paragraph{\textbf{Acknowledgments}} {A.T. acknowledges financial support from the Italian Ministry of Education, University and Research (FIRB project RBFR10COAQ).}

\end{document}